\DeclareSymbolFont{cyrletters}{OT2}{wncyr}{m}{n}
\DeclareMathSymbol{\Sha}{\mathalpha}{cyrletters}{"58}
\definecolor{refkey}{rgb}{1,1,1}
\definecolor{labelkey}{rgb}{1,1,1}
\definecolor{cite}{rgb}{0.9451,0.2706,0.4941}
\definecolor{ruri}{rgb}{0.0078,0.4022,0.8010}
\def\End{{\rm End}}
\def\Spec{{\rm Spec}}
\newcommand{\sech}{\mathrm{sech}}
\numberwithin{equation}{section}
\begin{document}

\title{ Hyperbolic Superspaces  and Super-Riemann Surfaces
}


\author{Zhi Hu     \and Runhong Zong
}


\institute{
Zhi Hu \at
Research Institute for Mathematical Sciences, Kyoto University, Kyoto, 606-8502, Japan\\
Department of Mathematics, Mainz University, 55128 Mainz, Germany\\School of Mathematics, University of Science and
Technology of China, Hefei, 230026, China \\
              \email{halfask@mail.ustc.edu.cn; huz@uni-mainz.de} \\
Runhong Zong \at Department of Mathematics, Nanjing University, , Nanjing,  21009, China\\
\email{ rzong@nju.edu.cn}\and
           }

\date{Received: date / Accepted: date}

\maketitle

\begin{abstract}In this paper, we will generalize some  results in Manin's paper \cite{m}  to the supergeometric setting. More precisely, viewing $\mathbb{C}^{1|1}$ as the boundary of the  hyperbolic superspace $\mathcal{H}^{3|2}$,  we reexpress the super-Green functions on the supersphere $\hat{\mathbb{C}}^{1|1}$ and the supertorus $T^{1|1}$ by some data derived from the supergeodesics in $\mathcal{H}^{3|2}$.
\end{abstract}

\keywords{Hyperbolic superspaces, Super-Riemann Surfaces, Super-Green functions, Supergeodesics}
\tableofcontents

\section{Introduction}

Let $K$ be a number field, and $\mathcal{O}_K$ be the ring of integers of $K$. The choice of a model $X_{\mathcal{O}_K}$ of a smooth algebraic curve over $K$ defines an arithmetic surface over $\Spec(\mathcal{O}_K)$. A closed vertical fiber of $X_{\mathcal{O}_K}$ over a prime $p\in\mathcal{O}_K$ is given by $X_p$, the reduction mod $p$ of the model. The completion $\bar X_{\mathcal{O}_K}$ of $X_{\mathcal{O}_K}$ is achieved by adding to $\Spec(\mathcal{O}_K)$ the archimedean places represented by the set of all embeddings $\alpha:K\hookrightarrow \mathbb{C}$. The Arakelov divisors on the completion $\bar X_{\mathcal{O}_K}$ are defined by the divisors on $X_{\mathcal{O}_K}$ and by the formal real combinations of the closed vertical fibers at infinity. However, Arakelov geometry does not provide an explicit description of these fibers,  and it prescribes instead a Hermitian metric on each Riemann surface $X_\mathbb{C}$ for each archimedean prime $\alpha$. The Hermitian geometry on each $X_\mathbb{C}$ is sufficient to develop an intersection theory on the completed model. For example, the intersection indices of divisors on the fibers at infinity are obtained via Green functions on $X_\mathbb{C}$ \cite{l}.

The missing structure in Arakelov's theory is the analog at the arithmetic infinity of the  reductions modulo powers of $p$ of the closed fibers of $X_{\mathcal{O}_K}$. In \cite{m}, Yu. Manin planned to enrich Arakelov's metric structure by realizing such missing structure. Inspired by Mumford's $p$-adic uniformization theory of algebraic curve \cite{mu},  he suggested to construct a differential-geometric object --a certain hyperbolic 3-manifold-- playing the role of a model at infinity. Roughly speaking, by choice of a Schottky uniformization, the Riemann surface $X_\mathbb{C}$ is the boundary at the infinity of a hyperbolic 3-manifold described as the quotient of the hyperbolic 3-space $\mathbb{H}^3$ by the action of the Schottky group $\Gamma$. Furthermore, Manin corroborated his suggestion by interpreting Arakelov Green functions in terms of geodesic configurations on this space.

Along Manin's pioneering work, there are several generalizations toward different aspects. The followings are some interesting progresses on this topic.
\begin{itemize}
  \item A. Werner generalized Manin's approach to higher dimensional cases \cite{w}.  Namely, she interpreted certain Archimedean (non-Archimedean) Arakelov intersection numbers of linear cycles on $\mathbb{P}^{n-1}$ with Riemannian symmetric space associated to $SL(n,\mathbb{C})$ (Bruhat-Tits building associated to $PGL(n)$).
\item Although when  Manin's work was published, one did not yet discover various novel dualities in string theory,  from the physical point of view, Manin's perspective has a heavy flavor of   AdS/CFT-correspondence since Green functions are related to the quantum correlation functions of boundary CFT \cite{s} and geodesic configurations are related to the classical bulk gravity  with the cosmology constant. In the paper \cite{mm}, Yu. Manin and M. Marcolli considered   certain hyperbolic 3-manifolds as analytic continuations of the known Lorentzian signature black holes (e.g. BTZ black hole, Krasnov black hole), and they demonstrated that  the expressions for Green functions  in terms of the geodesic configurations  in the above hyperbolic 3-manifolds can be nicely interpreted in the spirit of  AdS/CFT-correspondence.
\item In the paper \cite{cm}, C. Consani and M. Marcolli consider the case of an arithmetic surface over $\Spec(\mathcal{O}_K)$ with the fibers of genus $g\geq 2$. They defined a cohomology of the cone of the local monodromy $N$ at arithmetic infinity which is related to Delinger's archimedean cohomology and regularized determinants \cite{de}. And by Connes' theory of spectral triples, they established a connection between such cohomology and the dual graph of the fiber at infinity in terms of the infinite tangle of bounded geodesics in the hyperbolic 3-manifold.
\end{itemize}

In this paper, we will generalize Manin's results to the supergeometric setting. The motivation is that in  the original AdS/CFT-correspondence coming from string theory, supersymmetries are the  necessary ingredients both appearing in the boundary and the bulk theories.
Firstly, let us   briefly  collect some basic materials on  geometry of supermanifolds. More details can be found in \cite{1,2,3}.
\paragraph{\textbf{I. \ Supermanifolds.}}
There are two approaches to define supermanifolds:
\begin{itemize}
  \item Algebro-geometric definition: a supermanifold $M^{p|q}$
of dimension
$p|q$
is a pair $(M,\mathcal{O}_M)$ consisting of a (Hausdorff and
second countable) topological space $M$
together with a sheaf $\mathcal{O}_M$ of commutative superalgebras
with unity, such that
\begin{itemize}
  \item there exists an  open cover $\{U_\alpha\}$ of $M$, where for each $\alpha$, $\mathcal{O}_M(U_\alpha)\simeq C^\infty(U_\alpha)\otimes \Lambda(\mathbb{R}^q)$,
  \item if $\mathcal{N}_M$ is the sheaf of nilpotents of $\mathcal{O}_M$, then $(M,\mathcal{O}_M/\mathcal{N}_M)$ is isomorphic to $(M,C^\infty(M))$.
\end{itemize}
  \item Differential geometric definition: let $\mathbb{R}^{p|q}$ be a $p|q$-dimensional superspace over the real Grassmann algebra $\Lambda^\infty_{\mathbb{R}}$, and endow $\mathbb{R}^{p|q}$ with  the (non-Hausdorff) De Witt topology, then a supermanifold $M^{p|q}$ is obtained by gluing open sets of $\mathbb{R}^{p|q}$ via superdiffeomorphisms. Denote by $\epsilon_0$  the projection map onto the zero-order part  of the Grassmann algebra, and  define an  equivalence relation $\sim$ on the supermanifold: for $x,y\in M^{p|q}$, $x\sim y$ iff $\epsilon_0(x)=\epsilon_0(y)$, then the space $M=M^{p|q}/\sim$ is a $p$-dimensional $C^\infty$ manifold, called the body of $M^{p|q}$.
      \end{itemize}
  The two categories of supermanifolds defined by the above two different manners respectively  are essentially equivalent. However, the latter one is more geometric, so   it  is convenient to talk about the  local coordinate $\{X_A\}_{A=1,\cdots,p;p+1,\cdots,p+q}=(x_1,\cdots, x_p;\theta_1,\cdots, \theta_q)$ of $M^{p|q}$ with $x_1,\cdots, x_p$ valued in the even part $(\Lambda^\infty_{\mathbb{R}})_0$ of the  Grassmann algebra $\Lambda^\infty_{\mathbb{R}}$ and $\theta_1,\cdots, \theta_q$ valued in the odd part $(\Lambda^\infty_{\mathbb{R}})_1$. Hence it  is accepted commonly  by the physicists. Therefore, we will adopt the second approach  throughout this paper.

\paragraph{\textbf{II. \ Super-Riemann geometry.}}
 One can  define the tangent sheaf  $T_{M^{p|q}}$ and the cotangent sheaf $\Omega^1_{M^{p|q}}$ over a supermanifold $M^{p|q}$. The supermetric on $M^{p|q}$ is a  graded-symmetric even non-degenerate $\mathcal{O}_{M}$-linear morphism of sheaves $\langle\bullet,\bullet\rangle:T_{M^{p|q}}\times T_{M^{p|q}}\rightarrow\mathcal{O}_{M}$. Switching to  of differential geometric point of view, one writes the supermetric as  $g=dX_A({_A g_{B}}){_B dX}$ in terms of the local coordinates with $_B dX=(-1)^{|B|}dX_B$. Then the corresponding  super-Levi-Civita connection $\nabla^g$ is given by the super-Christoffel symbols
 $$\Gamma^C_{AB}=\frac{1}{2}(-1)^{|D|}{^Cg^{D}}[{_Dg_{A}}_{,B}+(-1)^{|A||B|}{_Dg_{B}}_{,A}-(-1)^{|D|(|A|+|B|)}{_Ag_{B}}_{,D}].$$
 A supercurve $\Upsilon:\mathbb{R}^{1|1}\rightarrow M^{p|q}$ with parameter $(u;\gamma)$ is called a super-geodesic with respect to $\nabla^g$ if and only if
$$\frac{\nabla^g}{du}(\Upsilon_*\partial_u)=0,$$
which is locally equivalent to the system of differential equations
$$\frac{d^2}{d u^2}\Phi^*(X_A)+\sum_{B,C}\frac{d}{d u}\Phi^*(X_B)\frac{d}{d u}\Phi^*(X_C)\Phi^*(\Gamma^A_{BC})=0$$
for any $A$. Super-Riemann curvature, super-Ricci curvature and super scalar curvature can be also easily generalized to supermanifolds by formulas
\begin{align*}
 R_{ABC}^D              &=-\Gamma^D_{AB,C}+(-1)^{|B|(|A|+|E|)}\Gamma^D_{EB}\Gamma^E_{AC}\\
 &\ \ \ \ \ +(-1)^{|B||C|}\Gamma^{D}{AC},B-(-1)^{|C|(|A|+|B|+|E|)}\Gamma^D_{EC}\Gamma_{AB},\\
 R_{AC}&=(-1)^{|B|(|A|+1)}R_{ABC}^B,\\
 R&=R_{AB}g^{AB}.
\end{align*}
 \begin{definition}\begin{enumerate}
                     \item  A supermanifold $M^{p|q}$ is called  a (positive/ negative/ zero) Einstein supermanifold if it admits a supermetric $g=dX_A({_A g_{B}}){_B dX}$ such that  the corresponding super-Ricci curvature satisfies the condition $$R_{AB}=c{_Ag_{B}}$$ for a (positive/ negative/ zero) real number $c$.
                     \item A supermanifold $M^{p|q}$ is called  a (positive/ negative/ zero) Bosnic  supermanifold if it admits a supermetric $g=dX_A({_A g_{B}}){_B dX}$ such that  the corresponding super scalar curvature satisfies the condition that  $\epsilon_0(R)$ is a  (positive/  negative/ zero) real constant.
                   \end{enumerate}
    \end{definition}

Next, in Sect. 2 we will consider the minimal supergeometric extension of some classical low dimensional hyperbolic manifolds, in particular, the  supermetrics and super-volume forms invariant under certain Lie supergroups are constructed. In Sect. 3, we will define the super-Green functions for super-Riemann surfaces, then viewing $\mathbb{C}^{1|1}$ as the boundary of the hyperbolic superspace $\mathcal{H}^{3|2}$,  we reexpress the super-Green functions on the  supersphere $\hat{\mathbb{C}}^{1|1}$ and the  supertorus $T^{1|1}$ by some data derived from the supergeodesics in $\mathcal{H}^{3|2}$. The main results of this paper are summarized as follows.
 \begin{theorem}[=Proposition \ref{8}, Proposition \ref{9} and Proposition \ref{10}]\begin{enumerate}
                                                                                     \item Viewing the super-Riemann sphere $\hat{\mathbb{C}}^{1|1}$ as the boundary of $\mathcal{H}^{3|2}\bigcup\{\infty\}$, the  super-Green function on $\hat{\mathbb{C}}^{1|1}$ can be reexpressed as
      \begin{align*}
       \mathcal{G}_{P_1}(Z,\Theta)=&\log(\frac{1}{ \cosh d_Q(P_1,P_2)}-\frac{ 1}{ 2\cosh d_Q(P_1,P_2)}(1-\frac{1}{\cosh^2 d_Q(P_1,P_2)})\Theta\bar \Theta)\\
       =&\log\frac{1}{ \cosh \mathbf{d}(\tilde Q,\tilde P_{12})},
      \end{align*}
where $P_1=(0,0), P_2=(Z,\Theta)$ are points  in  $\hat{\mathbb{C}}^{1|1}$, $Q=(0,0,1;0,0)$ is a point  in $\mathcal{H}^{3|2}$, and $\tilde P_{12},  \tilde Q$ are also points  in $\mathcal{H}^{3|2}$ determined by $P_1,P_2,Q$.
                                                                                     \item Viewing the supertorus $T^{1|1}$ as the boundary of $\mathcal{H}^{3|2}/\tilde\Gamma$, where $\tilde\Gamma$ denotes the super-Poinc\'{a}re  extension $\tilde\Gamma$ of the supertranslation group,
                                                                                        the super-Green function on $T^{1|1}$ with supermoduli $(\tau;\delta)$ can be expressed as
\begin{align*}
  \mathcal{G}(Z,\Theta)=&\frac{1}{2}d(P_0,\widehat{\mathfrak{Q}})B_2(\frac{d(P_0,\widehat P)}{d(P_0,\widehat{\mathfrak{Q}})})+ d(P_0,\widehat {P_-})\\
  &+\sum_{n=1}^\infty(d(P_0, \widehat{\mathfrak{Q}^n})+d(P_0,\widehat{\mathfrak{Q}_-^n}))+\frac{4\pi^2}{d(P_0,\widehat{\mathfrak{Q}})}\Theta\bar \Theta\\
 =&\frac{1}{2}\mathbf{d}(\widetilde{P}_0,\widehat{\mathfrak{Q}})B_2(\frac{d(\widetilde{P}_0,\widehat P)}{\mathbf{d}(\widetilde{P}_0,\widehat{\mathfrak{Q}})})+ \mathbf{d}(\widetilde{P}_0,\widehat {P_-})\\
  &+\sum_{n=1}^\infty(\mathbf{d}(\widetilde{P}_0, \widehat{\mathfrak{Q}^n})+\mathbf{d}(\widetilde{P}_0,\widehat{\mathfrak{Q}_-^n}))+\frac{4\pi^2}{\mathbf{d}(\widetilde{P}_0,\widehat{\mathfrak{Q}})}\Theta\bar \Theta,
\end{align*}
where $\mathfrak{Q}=(\mathfrak{q},\Theta)$, $P_-=(1-\rho;\Theta)$, $\mathfrak{Q}^n=(1-\mathfrak{q}^n\rho,\Theta)$ and $\mathfrak{Q}^n_-=(1-\mathfrak{q}^n\rho^{-1},\Theta)$ are all points lying on the boundary of $\mathcal{H}^{3|2}$ with $\mathfrak{q}=e^{2\pi i(\tau+\Theta\delta)}$.
                                                                                   \end{enumerate}

 \end{theorem}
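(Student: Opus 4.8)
The plan is to reduce both parts to explicit computations in the upper-half-superspace model of $\mathcal{H}^{3|2}$ constructed in Section 2, using throughout that every function of the fermionic coordinates is a polynomial of degree at most $2$: the invariant supermetric, its super-Christoffel symbols, the supergeodesics, and the invariant super-distance $\mathbf{d}$ are all obtained from their classical $\mathbb{H}^3$-counterparts by a Taylor expansion in $\Theta,\bar\Theta$ that terminates at order $\Theta\bar\Theta$. One exploits the isometry super-Lie group acting on $\mathcal{H}^{3|2}$ with boundary $\mathbb{C}^{1|1}$ both to bring point configurations into the normal form of the statement and to reduce the verification of each identity to that one representative, since the super-Green functions of Section 3 and the right-hand sides are manifestly invariant.

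For Part (1), with $P_1=(0,0)$ and $P_2=(Z,\Theta)$, I would first integrate the supergeodesic equations $\frac{d^2}{du^2}\Phi^*(X_A)+\sum_{B,C}\frac{d}{du}\Phi^*(X_B)\frac{d}{du}\Phi^*(X_C)\Phi^*(\Gamma^A_{BC})=0$ with the two endpoints prescribed on $\partial\mathcal{H}^{3|2}$; by the degree bound the resulting supergeodesic $\widehat{P_1P_2}$ is the classical semicircle through $0$ and $Z$ corrected by an explicit $\Theta\bar\Theta$-term, from which one reads off the foot of the perpendicular dropped from $Q=(0,0,1;0,0)$. Next compute $\mathbf{d}$ from $Q$ to $\widehat{P_1P_2}$: its body is the classical distance with $\sinh d_Q(P_1,P_2)=|Z|^{-1}$, and its nilpotent part is exactly the $\Theta\bar\Theta$-coefficient in the displayed formula, so comparison with the definition of the super-Green function on $\hat{\mathbb{C}}^{1|1}$ in Section 3 (whose body is the Arakelov Green function of the round sphere) yields the first equality. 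The second equality is then a repackaging: take $\tilde P_{12}$ to be that foot of the perpendicular and $\tilde Q$ the point obtained from $Q$ by the odd shift forced by the computation, and check that $-\log\cosh\mathbf{d}(\tilde Q,\tilde P_{12})$ has the same expansion.

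For Part (2), realize $T^{1|1}$ as $\partial(\mathcal{H}^{3|2}/\tilde\Gamma)$ with $\tilde\Gamma=\langle\tilde\gamma\rangle$ the super-Poincar\'{e} extension of $Z\mapsto\mathfrak{q}Z$, $\mathfrak{q}=e^{2\pi i(\tau+\Theta\delta)}$, whose axis descends to the core supergeodesic of the solid super-torus. As in Manin's geodesic picture the super-Green function on $T^{1|1}$ decomposes into the contribution of that core supergeodesic and a $\tilde\Gamma$-average of the sphere super-Green function of Part (1), together with the Eisenstein/scattering correction needed to restore normalization against the invariant super-volume form; the $\tilde\Gamma$-images of the relevant boundary point are precisely $\mathfrak{Q}^n=(1-\mathfrak{q}^n\rho,\Theta)$ and $\mathfrak{Q}^n_-=(1-\mathfrak{q}^n\rho^{-1},\Theta)$, so applying Part (1) term by term turns the average into the $n=0$ term $\mathbf{d}(\tilde P_0,\widehat{P_-})$ plus the series $\sum_{n\ge1}(\mathbf{d}(\tilde P_0,\widehat{\mathfrak{Q}^n})+\mathbf{d}(\tilde P_0,\widehat{\mathfrak{Q}^n_-}))$, which converges since $\mathfrak{q}^n\to0$ geometrically. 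The core contribution is treated as in the classical computation of the elliptic Green function: the translation length is $\mathbf{d}(\tilde P_0,\widehat{\mathfrak{Q}})$, the normalized position of $\tilde P_0$ along the core is $\mathbf{d}(\tilde P_0,\widehat P)/\mathbf{d}(\tilde P_0,\widehat{\mathfrak{Q}})$, and summing the quadratic term together with the linear boundary terms assembles them into $\frac{1}{2}\mathbf{d}(\tilde P_0,\widehat{\mathfrak{Q}})B_2(\cdot)$; carrying the odd part through the same steps produces the $4\pi^2\Theta\bar\Theta/\mathbf{d}(\tilde P_0,\widehat{\mathfrak{Q}})$ term.

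The main obstacle is not the classical skeleton, which is Manin's, but the bookkeeping of the odd directions: integrating the supergeodesic equations explicitly and --- more delicately --- pinning down the correct invariant super-distance $\mathbf{d}$ and the meaning of the ``hatted'' supergeodesics $\widehat{X}$ so that the repackaged expressions $-\log\cosh\mathbf{d}(\tilde Q,\tilde P_{12})$ in Part (1) and the $\mathbf{d}$-series in Part (2) literally equal the super-Green functions of Section 3. This forces one to get the odd shifts (of $P_1$, $Q$, $P_0$ to $\tilde P_{12}$, $\tilde Q$, $\tilde P_0$) exactly right and to verify that the Eisenstein correction in Part (2) contributes nothing in the odd sector beyond the stated $\Theta\bar\Theta$-term; everything else is a controlled, finite-order deformation of the bosonic arguments.
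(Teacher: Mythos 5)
Your Part (1) is essentially the paper's own argument: one solves the supergeodesic equations explicitly in the upper half superplane $\mathbb{C}\mathcal{H}^{1|1}_{P_1P_2}$, locates the point $P^Q_{12}$ minimizing the body of the distance to $Q=(0,0,1;0,0)$, finds $\cosh d_Q(P_1,P_2)=\sqrt{1+|Z|^{-2}}$, and compares with $G_{P_1}=\sqrt{Z\bar Z/(1+Z\bar Z+\Theta\bar\Theta)}$; the second equality is then obtained by passing to the Uehara--Yasui superdistance $\mathbf{d}$, a modified supergeodesic $\tilde{\mathcal{G}}_{II}$ carrying the odd data, and an explicit $\Theta\bar\Theta$-shift of $Q$ to $\tilde Q$. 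One small imprecision: in the first displayed equality $d_Q(P_1,P_2)$ is purely bosonic (the construction via $\{P_1,P_2\}_\xi$ ``is not sensitive to the odd coordinates,'' as the paper says), so the $\Theta\bar\Theta$-term there is not the nilpotent part of any distance but simply the expansion of $\log G_{P_1}$; only in the second equality is the odd part absorbed into $\mathbf{d}(\tilde Q,\tilde P_{12})$.

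For Part (2) there is a genuine gap. You assert that ``as in Manin's geodesic picture'' the super-Green function on $T^{1|1}$ decomposes into a core-geodesic contribution, a $\tilde\Gamma$-average of the sphere super-Green function, and an Eisenstein correction. In the paper this decomposition is not a geometric input but the output of an explicit identity: one proves, via the product formula for the super-theta function $\mathcal{T}(Z,\Theta;\tau,\delta)=\vartheta(Z;\tau+\Theta\delta)$, that $\mathcal{G}(Z,\Theta)=\log\bigl|\mathfrak{q}^{B_2(\cdot)/2}(1-\rho)\prod_{n\ge1}(1-\rho\mathfrak{q}^n)(1-\rho^{-1}\mathfrak{q}^n)\bigr|-2\pi\Theta\bar\Theta/\mathrm{Im}(\tau+\Theta\delta)$, and only then identifies each factor $\log|1-\mathfrak{q}^n\rho^{\pm1}|$ with a superdistance. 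That identification in turn requires the explicit construction of the lifted points $\widehat P$ (the endpoint $(\Theta\bar\Theta/4,|\rho|;\tfrac{1+i}{2}\Theta)$ of a specific supergeodesic issuing from the boundary point $(\rho;\Theta)$) and of the shifted basepoint $\widetilde{P}_0$, chosen precisely so that $\mathbf{d}(\widetilde{P}_0,\widehat P)=\log|\rho|$ with no odd remainder. Your sketch gives no definition of the hatted points and no computation showing that the odd corrections cancel, and the ``Eisenstein/scattering correction'' is left unquantified; without the super-theta identity and these distance computations the claimed term-by-term matching --- including the coefficient $4\pi^2/d(P_0,\widehat{\mathfrak{Q}})$ of $\Theta\bar\Theta$, which comes from $\log|\mathfrak{q}|=-2\pi\,\mathrm{Im}(\tau+\Theta\delta)$ --- is not established.
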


 Some more related questions are worthy to be   further studied. For example,
  \begin{itemize}
    \item we should consider the boundary supermanifold with the body as a higher genus Riemann surface, and consider the bulk supermanifold with the body as an another type of hyperbolic 3-manifold in Thurston's eight geometries \cite{hh};
    \item we should add more odd degree of  freedoms beyond the minimal supergeometric extension;
    \item we should give an interpretation of these Manin-type formulas from the viewpoint of  AdS/CFT-correspondence with supersymmetries as done in \cite{mm};
    \item we should consider similar problems for supergeometries  of non-Archimedean version, and consider the relation with $p$-adic AdS/CFT-correspondence \cite{gk};
    \item we should consider the mathematical aspects of AdS/CFT-correspondence inspired  by the Hyperbolic/Arakelov geometry correspondence. In particular, we should provide an exact expression as the initial step;
        \item we should ask if the Manin's model at the arithmetic infinity is replaced  by P. Scholze's  perfectoid version of hyperbolic 3-manifolds \cite{pet},  how does the story go?
  \end{itemize}

\paragraph{\textbf{Acknowledgement}}  The author Z. Hu would like to thank Prof. Yanghui He and Prof. S. Mochizuki  for their  continuing  supports and encouragements.
The authors would like to thank Prof. Bailing Wang, Prof. Kang Zuo and  Dr. Chunhui Liu, Dr. Ruiran Sun, Dr. Yu Yang for their helpful discussions and comments.  The authors are supported by the grant SFB/TR 45 of Deutsche Forschungsgemeinschaft.

\section{Hyperbolic Superspaces }

Let us  consider the superspace  $\mathbb{R}^{p+1|2q}$  endowed with a supermetric
$$ds^2=-dx_0^2+dz_1^2+\cdots+dx_p^2+d\theta_1d\theta_2+\cdots+d\theta_{2q-1}d\theta_{2q}$$
written  in terms of matrix as
$$g=\left(
      \begin{array}{cc}
        \eta_{1,p} & 0 \\
        0 & J_q \\
      \end{array}
    \right)
$$
for  $\eta_{1,p}=\left(
                   \begin{array}{cc}
                     -1 & 0 \\
                     0 & \mathrm{Id}_p\\
                   \end{array}
                 \right)
$, $J_q=\frac{1}{2}\left(
                    \begin{array}{ccccc}
                      0 & 1 && &\\
                    -1 & 0& &&\\
                   && \ddots\\
                    &&&0&1\\
                    &&&-1&0
                    \end{array}
                  \right).
$ The Lie supergroup preserving this supermetric is the orthosymplectic   group
$$OSp(1,p|2q,\mathbb{R})=\{X\in \mathrm{SMat}(p+1|2q,\mathbb{R}): X^{st}gX=g\},$$
where expressing $X=\left(
                      \begin{array}{cc}
                        A_{(0)} & B_{(1) }\\
                        C_{(1)}& D_{(0)} \\
                      \end{array}
                    \right)
$ in terms of the even parts labelled by subscript $(0)$ and the odd parts labelled by $(1)$, the supertranspose $X^{st}$ is given by $X^{st}=\left(
                      \begin{array}{cc}
                        A^t_{(0)} & C^t_{(1)} \\
                        -B^t_{(1)} & D^t_{(0)} \\
                      \end{array}
                    \right)
$.
A $p|2q$-dimensional hyperbolic total superspace $\mathbb{H}^{p|2q}$ is defined as
$$\mathbb{H}^{p|2q}=\{H:=\left(
                           \begin{array}{c}
                             x_0 \\
                             \vdots\\
                             x_{p}\\
                             \theta_1\\
                            \vdots\\
                            \theta_{2q}\\
                           \end{array}
                         \right)
\in \mathbb{R}^{p+1|2q}: H^{t}gH=-1\}.$$
The supergroup $OSp(1,p|2q,\mathbb{R})$ acts obviously on $(\mathbb{H}^{p|2q},ds^2)$.

\begin{proposition}
$\mathbb{H}^{p|2}$ can be identified with $OSp(1,p|2,\mathbb{R})/OSp(p|2,\mathbb{R})$.
\end{proposition}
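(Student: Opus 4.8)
The plan is to realise $\mathbb{H}^{p|2}$ as a homogeneous superspace in the standard way: choose a base point, prove transitivity of the $OSp(1,p|2,\mathbb{R})$-action, compute the isotropy supergroup at the base point, and finally verify that the orbit map is a submersion, so that the resulting bijection is an isomorphism of supermanifolds rather than merely of bodies.

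First I would take $H_0=(1,0,\dots,0;0,0)^{t}$, which satisfies $H_0^{t}gH_0=-1$ and hence lies in $\mathbb{H}^{p|2}$, and check that left multiplication defines an action of $OSp(1,p|2,\mathbb{R})$ on $\mathbb{H}^{p|2}$: the relation $X^{st}gX=g$ is exactly the statement that $X$ preserves the super-quadratic expression $H\mapsto H^{t}gH$ on Grassmann-valued column vectors, once the Koszul signs coming from the parities of the entries of $H$ are taken into account; in particular it preserves $\{H^{t}gH=-1\}$.

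Next comes transitivity, which I expect to be the only real obstacle. Given $H\in\mathbb{H}^{p|2}$, applying the augmentation $\epsilon_0$ shows that the body $\epsilon_0(H)$ lies on the classical hyperboloid $\{\xi^{t}\eta_{1,p}\xi=-1\}\subset\mathbb{R}^{p+1}$, on which $O(1,p)$ acts transitively; lifting a suitable element of $O(1,p)$ into $OSp(1,p|2,\mathbb{R})$ reduces to the case in which $H$ differs from $H_0$ only by nilpotent corrections. These corrections — the odd coordinates together with the induced even nilpotent part — can then be produced from $H_0$ by applying $\exp(\xi)$ for an appropriate $\xi\in\mathfrak{osp}(1,p|2,\mathbb{R})$, which one solves for order by order along the Grassmann filtration; equivalently, one invokes a Witt-type extension theorem for the form $g$ over the Grassmann algebra. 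The point to be careful about is that the naive classical Witt argument must be upgraded so that it respects parities and terminates, which it does thanks to nilpotence.

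Finally the isotropy and submersion steps, which are routine super-linear algebra. If $XH_0=H_0$ then the first column of $X$ equals $(1,0,\dots,0)^{t}$, so in $(1,p,2)$-block form $X$ has $(1,1)$-entry $1$, vanishing first block-column below the diagonal, an a priori nonzero first block-row $(v^{t},\beta^{t})$, and a lower-right block $\left(\begin{array}{cc}M & N\\ P & Q\end{array}\right)$; reading off the first block-row of $X^{st}gX=g$ forces $v=0$ and $\beta=0$, after which the equation restricts to $\left(\begin{array}{cc}M & N\\ P & Q\end{array}\right)^{st}g'\left(\begin{array}{cc}M & N\\ P & Q\end{array}\right)=g'$ with $g'=\mathrm{diag}(\mathrm{Id}_p,J_1)$, i.e.\ precisely the condition defining $OSp(p|2,\mathbb{R})$; conversely every such block matrix lies in $OSp(1,p|2,\mathbb{R})$ and fixes $H_0$, so $\mathrm{Stab}(H_0)\cong OSp(p|2,\mathbb{R})$. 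The orbit map $\mu\colon OSp(1,p|2,\mathbb{R})\to\mathbb{H}^{p|2}$, $X\mapsto XH_0$, has differential at the identity $\xi\mapsto\xi H_0$ on $\mathfrak{osp}(1,p|2,\mathbb{R})$, whose kernel is exactly $\mathfrak{osp}(p|2,\mathbb{R})$ and whose image has super-dimension $p|2=\dim\mathbb{H}^{p|2}$; hence $\mu$ is a submersion at $e$, and by equivariance everywhere, so it descends to the desired isomorphism of supermanifolds $OSp(1,p|2,\mathbb{R})/OSp(p|2,\mathbb{R})\cong\mathbb{H}^{p|2}$.
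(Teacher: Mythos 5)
Your proof is correct, but the route through the key step --- transitivity --- is genuinely different from the paper's. The paper proves transitivity by brute force: for each $H=(x_0,\vec{x},\vec{\theta})$ on the hyperboloid it writes down an explicit supermatrix $X\in OSp(1,p|2,\mathbb{R})$, with blocks built from $x_0$, $\mathrm{Id}_p+\vec{x}\vec{x}^{\,t}/(1+x_0)$ and the odd entries, and verifies directly that $XH_0=H$; the stabilizer statement is then only asserted. You instead reduce to the body via $\epsilon_0$, invoke classical transitivity of $O(1,p)$ on the hyperboloid, and remove the remaining nilpotent discrepancy by solving for $\exp(\xi)$, $\xi\in\mathfrak{osp}(1,p|2,\mathbb{R})$, order by order along the Grassmann filtration; you also supply the block computation identifying $\mathrm{Stab}(H_0)$ with $OSp(p|2,\mathbb{R})$ and the submersion check that upgrades the orbit bijection to an isomorphism of supermanifolds --- two points the paper leaves implicit. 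What the paper's explicit matrix buys is a concrete global section of the orbit map over the sheet where $1+x_0$ is invertible, which is convenient for the coordinate maps $\alpha,\beta$ introduced immediately afterwards; what your argument buys is independence from any clever ansatz, a clean separation of the body-level and nilpotent-level problems, and coverage of both sheets of the total hyperboloid (the paper's formulas degenerate at $\epsilon_0(x_0)=-1$). Two small caveats: over $\Lambda^\infty_{\mathbb{R}}$ the Grassmann filtration does not terminate, so \emph{nilpotence} should be read as convergence in the filtration topology (or one restricts to finitely many odd generators), and for the inductive step you should note explicitly that at each order the discrepancy between $\exp(\xi_{\le k})H_0$ and $H$ is tangent to the quadric at $H_0$, hence lies in $\mathfrak{osp}(1,p|2,\mathbb{R})\cdot H_0$ by your dimension count; neither point affects the validity of the argument.
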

\begin{proof}We consider a supervector $H=\left(
                            \begin{array}{c}
                              x_0 \\
                              \overrightarrow{x} \\
                              \overrightarrow{\theta} \\
                            \end{array}
                          \right)\in\mathbb{H}^{p|2}
$ for $ \overrightarrow{x}=\left(
                             \begin{array}{c}
                               x_1 \\
                               \vdots \\
                               x_p \\
                             \end{array}
                           \right)
$ and $\overrightarrow{\theta}=\left(
                                 \begin{array}{c}
                                   \theta_1 \\
                                    \theta_{2} \\
                                 \end{array}
                               \right)
$, then  $x_0^2=1+\overrightarrow{x}\cdot \overrightarrow{x}+\theta_1\theta_2$, where
$\overrightarrow{x}\cdot \overrightarrow{x}=\sum_{i=1}^px_i^2$,    and we consider a supermatrix $X=\left(
                      \begin{array}{cc}
                        A_{(0)} & B_{(1) }\\
                        C_{(1)}& D_{(0)} \\
                      \end{array}
                    \right)$,
where
\begin{align*}
  A_{(0)}&=\left(
                \begin{array}{cc}
                  x_0 & \overrightarrow{x}^t \\
                  \overrightarrow{x} & \mathrm{Id}_p+\frac{\overrightarrow{x}\overrightarrow{x}^t}{1+x_0} \\
                \end{array}
              \right), \\
  B_{(1) }&=\frac{1}{2}\left(\begin{array}{ccc}
                           \theta_1 & \theta_{2}\\
                           \frac{x_1}{1+x_0}\theta_{1}&\frac{x_1}{1+x_0}\theta_{2}\\
                           \vdots &   \vdots\\
                          \frac{x_p}{1+x_0}\theta_{1}&\frac{x_p}{1+x_0}\theta_{2}
                         \end{array}
                       \right),\\
  C_{(1)}&=\left(
                         \begin{array}{cccc}
                           \theta_1 & \frac{x_1}{1+x_0}\theta_1&\cdots& \frac{x_p}{1+x_0}\theta_1\\
                           \theta_{2} & \frac{x_1}{1+x_0}\theta_{2q}&\cdots &\frac{x_p}{1+x_0}\theta_{2}
                         \end{array}
                       \right),\\
        D_{(0)}&= \left(
                    \begin{array}{cc}
                      0 & -1+\frac{1}{2}\theta_1\theta_2 \\
                    1-\frac{1}{2}\theta_1\theta_2& 0
                    \end{array}
                  \right).
\end{align*}
One can check that $X\in OSp(1,p|2,\mathbb{R})$, and it transforms the supervector $H_0=\left(
                            \begin{array}{c}
                              1 \\
                              0 \\
                              \vdots\\
                              0 \\
                            \end{array}
                          \right)$ into $H$. The isotropy group of $H_0$ is exactly the supergroup $OSp(p|2,\mathbb{R})$ by the natural embedding. The claim follows.\qed
\end{proof}

From now on, we work only with two odd dimensions, and we call such  setup the \emph{minimal supergeometric extension}. Assume the zero-order part $\epsilon_0(x_0)$ of $x_0$ is positive, then $x_0$ has an inverse $$x_0^{-1}=\frac{1}{\epsilon_0(x_0)}[1-\frac{x_0-\epsilon_0(x_0)}{\epsilon_0(x_0)}+(\frac{x_0-\epsilon_0(x_0)}{\epsilon_0(x_0)})^2-\cdots]$$ in the Grassmann algebra $\Lambda^\infty_{\mathbb{R}}$, hence we define
the map
\begin{align*}\alpha:
\left(
                            \begin{array}{c}
                              x_0 \\
                              \overrightarrow{x} \\
                              \overrightarrow{\theta} \\
                            \end{array}
                          \right)\mapsto\left(
                            \begin{array}{c}
                              x_0^\prime \\
                              \overrightarrow{x}^\prime \\
                              \overrightarrow{\theta}^\prime \\
                            \end{array}
                          \right)=\left(
                            \begin{array}{c}
                              x_0^{-1} \\
                              x_0^{-1}\overrightarrow{x} \\
                              x_0^{-1}\overrightarrow{\theta} \\
                            \end{array}
                          \right)
\end{align*}
such that $\sum_{i=0}^p(x_i^\prime)^2+\theta_1^\prime\theta_2^\prime=1$.
Then since $\sum_{i=0}^p(\epsilon_0(x_i^\prime))^2=1$ and $\epsilon_0(x_0^\prime)>0$, $x_p^\prime+1$ also has an inverse $(x_p^\prime+1)^{-1}$, hence we define the map
\begin{align*}
 \beta:\left(
                            \begin{array}{c}
                              x_0^\prime \\
                              \overrightarrow{x}^\prime \\
                              \overrightarrow{\theta}^\prime \\
                            \end{array}
                          \right)\mapsto\left(
                            \begin{array}{c}
                              x_0^{\prime\prime} \\
                              \overrightarrow{x}^{\prime\prime} \\
                              \overrightarrow{\theta}^{\prime\prime} \\
                            \end{array}
                          \right)=\left(
                            \begin{array}{c}
                              (x_p^\prime+1)^{-1}x_0^\prime \\
                              \vdots\\
                              (x_p^\prime+1)^{-1}x_{p-1}^\prime \\
                              1\\
                             (x_p^\prime+1)^{-1} \overrightarrow{\theta}^\prime \\
                            \end{array}
                          \right).
\end{align*}
There are two supermetrics expressed in terms of the coordinates of supermanifolds as follows
\begin{align*}
  {ds^2}^\prime&=\frac{(dx^\prime_0)^2+\cdots+(dx_p^\prime)^2+d\theta^\prime_1d\theta^\prime_2}{(x_0^\prime)^2},\\
  {ds^2}^{\prime\prime}&=\frac{(dx_0^{\prime\prime})^2+\cdots+(dx_{p-1}^{\prime\prime})^2+d\theta_1^{\prime\prime}d\theta^{\prime\prime}_2}{(x_0^{\prime\prime})^2}
\end{align*}
such that $\alpha^*({ds^2}^\prime)=ds^2$ and $\beta^*({ds^2}^{\prime\prime})={ds^2}^\prime$.

In summary, enjoying the same supergeometry of the hyperbolic superspace $\mathcal{H}^{p|2}$, we have three models with supermetrics  listed as follows, whose bodies  are the corresponding models of usual $p$-dimensional hyperbolic space. The isometry group of the  supermetrics on $\mathcal{H}^{p|2}$ is the subgroup of $OSp(1,p|2,\mathbb{R})$, denoted by $OSp_0(1,p|2,\mathbb{R})$.
$$ \begin{adjustbox}{angle=90} \begin{tabular}{|l|l|l|}
      \hline
      Model & Definition & Metric\\
      \hline
      Super Hyperboloid Model  & $\mathcal{H}^{p|2}=\{H:=\left(
                           \begin{array}{c}
                             x_0 \\
                             \vdots\\
                             x_{p}\\
                             \theta_1\\
                            \theta_{2}\\
                           \end{array}
                         \right)
\in \mathbb{R}^{p+1|2}: H^{t}gH=-1, \epsilon_0(x_0)>0\}$ & $ds^2=-dx_0^2+dx_1^2+\cdots+dx_p^2+d\theta_1d\theta_2$  \\
      \hline
      Super Semisphere Model & $\mathcal{H}^{p|2}=\{H:=\left(
                           \begin{array}{c}
                             x_0 \\
                             \vdots\\
                             x_{p}\\
                             \theta_1\\
                            \theta_{2}\\
                           \end{array}
                         \right)
\in \mathbb{R}^{p+1|2}: H^{t}\left(
                                \begin{array}{cc}
                                  \mathrm{Id}_{p+1} & 0 \\
                                  0 & J_1 \\
                                \end{array}
                              \right)
H=1, \epsilon_0(x_0)>0\}$ & $ds^2=\frac{dx_0^2+dx_1^2+\cdots+dx_p^2+d\theta_1d\theta_2}{x_0^2}$  \\
      \hline
     Upper Half Superspace Model & $\mathcal{H}^{p|2}=\{(x_0,\cdots, x_{p-1}|\theta_1,\theta_2)\in \mathbb{R}^{p|2}: \epsilon_0(x_0)>0\}$&$ds^2=\frac{dx_0^2+dz_1^2+\cdots+dx_{p-1}^2+d\theta_1d\theta_2}{x_0^2}$ \\
      \hline
  \end{tabular}\end{adjustbox}.$$

  We will focus on the cases of $p=2,3$, which admit  richer  structures. Firstly, when $p=2$, $\mathcal{H}^{2|2}$ (the upper half superplane model) can be made into a $1|1$-dimensional complex supermanifold (and redenoted by $\mathbb{C}\mathcal{H}^{1|1}$) by introducing the complex coordinates $(Z;\Theta)$
  for $Z=ix_0+x_1, \Theta=i\theta_1+\theta_2$ with complex conjugates $\bar Z=-ix_0+x_1, \bar \Theta=i\theta_2+\theta_1$ satisfying the rules: $ \overline{\bullet+\ast}=\bar\bullet+\bar\ast, \overline{\bullet\ast}=\bar\ast \bar \bullet$.  The superconformal changes of coordinates are given by the
  supermatrix
  $$\Gamma=\left(
                                                                                                                              \begin{array}{ccc}
                                                                                                                                a & b & \alpha b-\beta a \\
                                                                                                                                c & e & \alpha e-\beta c \\
                                                                                                                                \alpha & \beta & 1 +\beta\alpha\\
                                                                                                                              \end{array}
                                                                                                                            \right)\in OSp(1|2,\mathbb{C}),$$
   with $ae-bc=1+\alpha\beta$, via the
  super-M\"{o}bius transformations
  \begin{align*}
   Z\mapsto Z^\prime&=\frac{aZ+b+(\alpha b-\beta a)\Theta}{cZ+e+(\alpha e-\beta c)\Theta}\\
   &=\frac{aZ+b}{cZ+e}+\Theta\frac{\alpha Z+\beta}{(cZ+e)^2},\\
   \Theta\mapsto\Theta^\prime&=\frac{\alpha Z+\beta+(1-\alpha\beta)\Theta}{cZ+e+(\alpha e-\beta c)\Theta}\\&=\frac{\alpha Z+\beta}{cZ+e}+\Theta\frac{1}{cZ+e}.
  \end{align*}
  The following proposition collects some important properties of $\mathbb{C}\mathcal{H}^{1|1}$ with respect to the super-M\"{o}bius transformations. For the convenience of the readers, we give a brief  proof of these properties, more details can be found in \cite{u,uy,sm,mmm}.
  \begin{proposition} \begin{enumerate}
                       \item Assume $\Gamma\in OSp(1|2,\mathbb{R})$, and Let $$Y=\mathrm{Im}(Z)+\frac{1}{2}\Theta\bar\Theta=x_0-\theta_1\theta_2,$$ then under  the super-M\"{o}bius transformations, \begin{align*}
                                             Y^\prime=|F_\Gamma(Z,\Theta)|^2Y,
                                            \end{align*}
                                            for $$\  F_\Gamma(Z,\Theta)=\frac{1}{cZ+e+(\alpha e-\beta c)\Theta}.$$
                        Hence $\epsilon_0(Y^\prime)=\frac{1}{|\epsilon_0(c)\epsilon_0(z)+\epsilon_0(e)|^2}\epsilon_0(Y)>0$.
                       \item The  supermetric
                       \begin{align}\label{ss}
                        ds^2=\frac{dx_0^2+dx_1^2-2(\theta_1d\theta_2-\theta_2d\theta_1)dx_0+2(\theta_2d\theta_2+\theta_1d\theta_1)dx_1+4(t-2\theta_1\theta_2)d\theta_1d\theta_2}{x_0^2-2x_0\theta_1\theta_2},
                       \end{align}is invariant under  any  super-M\"{o}bius transformation $\Gamma\in OSp(1|2,\mathbb{R})$,  and gives rise to an $OSp(1|2,\mathbb{R})$-invariant  super-volume form
                       $$d\mathrm{SVol}=\frac{1}{2}(\frac{1}{x_0}+\frac{\theta_1\theta_2}{x_0^2})dx_0dx_1d\theta_1d\theta_2.$$ Moreover, this supermetric makes $\mathcal{H}^{2|2}$ into a negative Einstein supermanifold.
  \end{enumerate}
 \end{proposition}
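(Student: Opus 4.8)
For part (1), the plan is to write the super-M\"obius action in the ``slash'' form $Z' = N D^{-1}$ and $\Theta' = M D^{-1}$, with $N = aZ+b+(\alpha b-\beta a)\Theta$, $D = cZ+e+(\alpha e-\beta c)\Theta$ and $M = \alpha Z+\beta+(1-\alpha\beta)\Theta$, so that $F_\Gamma(Z,\Theta) = D^{-1}$. First I would note that $D$ is invertible in the Grassmann algebra: $\epsilon_0(D) = \epsilon_0(c)\epsilon_0(Z)+\epsilon_0(e)$, which is nonzero because either $\epsilon_0(c)\neq 0$, and then $\epsilon_0(Z)$ lies in the open upper half plane, or $\epsilon_0(c)=0$, and then $ae-bc = 1+\alpha\beta$ forces $\epsilon_0(e)\neq 0$. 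Using $\overline{XY}=\bar Y\,\bar X$, the centrality of even elements, and $\overline{\bar D}=D$, one rewrites
\[
\mathrm{Im}(Z') = \frac{ND^{-1}-\bar D^{-1}\bar N}{2i} = \frac{\bar D N-\bar N D}{2i\,\bar D D} = \frac{\mathrm{Im}(\bar D N)}{|D|^{2}},\qquad \tfrac{1}{2}\Theta'\bar\Theta' = \frac{1}{|D|^{2}}\cdot\tfrac{1}{2} M\bar M,
\]
so that $Y' = |D|^{-2}\big(\mathrm{Im}(\bar D N)+\tfrac{1}{2} M\bar M\big)$ and, since $|F_\Gamma|^{2}=|D|^{-2}$, everything reduces to the algebraic identity $\mathrm{Im}(\bar D N)+\tfrac{1}{2} M\bar M = Y$. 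This I would obtain by a finite expansion: multiply out $\bar D N$ and $M\bar M$, then repeatedly apply $\alpha^{2}=\beta^{2}=0$, the anticommutativity of the odd constants with $\Theta,\bar\Theta$, and $ae-bc=1+\alpha\beta$; the result collapses to $\mathrm{Im}(Z)+\tfrac{1}{2}\Theta\bar\Theta = Y$. The closing assertion about $\epsilon_0$ is then immediate, since $\epsilon_0$ is an algebra homomorphism that annihilates $\alpha e-\beta c$ and each $\theta_i$, giving $\epsilon_0(|F_\Gamma|^{2}) = |\epsilon_0(c)\epsilon_0(Z)+\epsilon_0(e)|^{-2}$ and $\epsilon_0(Y)=\epsilon_0(x_0)>0$ by hypothesis. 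An equivalent route would be to verify $Y' = |F_\Gamma|^{2}Y$ on a generating set of $OSp(1|2,\mathbb{R})$ --- translations $Z\mapsto Z+b$, dilations, the inversion $Z\mapsto -Z^{-1}$, and the two odd supertranslations --- and propagate it through the cocycle identity $F_{\Gamma_1\Gamma_2}=(F_{\Gamma_1}\!\circ\Gamma_2)\cdot F_{\Gamma_2}$.

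For the invariance of the supermetric \eqref{ss}, I would first rewrite it in the complex superconformal coordinates. Using $dx_0 = -\tfrac{i}{2}(dZ-d\bar Z)$, $dx_1 = \tfrac{1}{2}(dZ+d\bar Z)$ together with $\theta_1 d\theta_2-\theta_2 d\theta_1 = -\tfrac{1}{2}(\Theta d\bar\Theta-\bar\Theta d\Theta)$, $\theta_1 d\theta_1+\theta_2 d\theta_2 = -\tfrac{i}{2}(\Theta d\bar\Theta+\bar\Theta d\Theta)$, $d\theta_1 d\theta_2 = -\tfrac{1}{2} d\Theta d\bar\Theta$, and observing that the denominator of \eqref{ss} is exactly $Y^{2}=(x_0-\theta_1\theta_2)^{2}$, one puts \eqref{ss} in the form $ds^{2}=Y^{-2}\Omega$ with $\Omega$ a quadratic form in $dZ,d\bar Z,d\Theta,d\bar\Theta$ whose leading term is $dZ\,d\bar Z$ and whose lower-order terms have coefficients in $\Theta,\bar\Theta$. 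The plan is then to prove $\Gamma^{*}\Omega = |F_\Gamma|^{4}\Omega$; combined with $\Gamma^{*}Y = |F_\Gamma|^{2}Y$ from part (1) this gives $\Gamma^{*}(Y^{-2}\Omega)=Y^{-2}\Omega$, i.e., invariance. I would establish $\Gamma^{*}\Omega=|F_\Gamma|^{4}\Omega$ either by a direct computation --- short once reduced to the generators of $OSp(1|2,\mathbb{R})$ --- or structurally, by recognizing \eqref{ss} as the pullback along the maps $\alpha,\beta$ of the restriction to the super-hyperboloid $\{H^{t}gH=-1\}\subset\mathbb{R}^{3|2}$ of the flat metric $g$, which is $OSp(1,2|2,\mathbb{R})$-invariant by construction, after matching the super-M\"obius action of $OSp(1|2,\mathbb{R})$ with the corresponding subgroup of isometries.

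For the super-volume form, one first notes $\tfrac{1}{x_0}+\tfrac{\theta_1\theta_2}{x_0^{2}}=Y^{-1}$, so the claim reads $d\mathrm{SVol}=\tfrac{1}{2Y}\,dx_0\,dx_1\,d\theta_1\,d\theta_2$; one then either checks that this is the Berezinian density of the metric \eqref{ss} (taking the square root of the Berezinian of the $4\times4$ metric matrix, with $\theta_i^{2}=0$), or, more directly, computes the Berezinian of the super-M\"obius Jacobian $\partial(x_0',x_1',\theta_1',\theta_2')/\partial(x_0,x_1,\theta_1,\theta_2)$ --- its bosonic part contributing $|F_\Gamma|^{4}$, its fermionic part $|F_\Gamma|^{-2}$, hence the whole a factor $|F_\Gamma|^{2}$ --- and combines it with $\Gamma^{*}Y=|F_\Gamma|^{2}Y$. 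For the Einstein property I would substitute \eqref{ss} into the Koszul-type formula for the $\Gamma^{C}_{AB}$ and then into the graded Ricci formula recalled in the Introduction, using homogeneity to cut down the work: $OSp_0(1,2|2,\mathbb{R})$ acts transitively on $\mathcal{H}^{2|2}$ by isometries (the first Proposition), so $R_{AB}$ and ${}_Ag_B$ are both invariant; by parity there is no even--odd mixing, while the isotropy $OSp(2|2,\mathbb{R})$ acts irreducibly on the even and on the odd tangent directions. Hence it suffices to evaluate at the base point $(x_0,x_1;\theta_1,\theta_2)=(1,0;0,0)$ and to check that $R_{AB}=c\,{}_Ag_B$ holds there with one and the same constant $c$ on the even and on the odd block. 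Since the body of $\mathcal{H}^{2|2}$ carries the hyperbolic-plane metric, whose Ricci tensor equals $-g$, this common constant is negative, which is the negative Einstein property. (Alternatively, the super Gauss equation for the totally umbilic hypersurface $\{H^{t}gH=-1\}\subset\mathbb{R}^{3|2}$, with its flat ambient connection, delivers the Einstein property directly.)

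The conceptual content here is light; the difficulty is bookkeeping. In part (1) one has to keep careful track of the signs generated by commuting the odd constants $\alpha,\beta$ past $\Theta,\bar\Theta$. The genuinely laborious step --- and the one I expect to be the main obstacle --- is the curvature computation for the Einstein statement: even after localizing at a single point by homogeneity, assembling the super-Christoffel symbols of the non-diagonal metric \eqref{ss} and contracting them into the super-Ricci tensor with the correct graded signs is delicate, and showing that the even-block and odd-block proportionality constants coincide is the real crux of that last assertion.
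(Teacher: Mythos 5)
Your proposal is sound, and for the two items the paper actually works out in detail it follows the same route: invariance of the supermetric is checked on the generating set $T_1,\dots,T_4$ of the super-M\"obius group (your $\Gamma^{*}\Omega=|F_\Gamma|^{4}\Omega$ on generators is exactly the paper's ``check invariance under $T_i$''), and the volume form is obtained as the square root of the Berezinian of the metric matrix. Your treatment of part (1) via the identity $\mathrm{Im}(\bar DN)+\tfrac12 M\bar M=Y$ is in fact more explicit than the paper, which essentially defers this to the cited references. The genuine divergence is in the Einstein statement, and here you miss the shortcut that makes the paper's proof almost computation-free: the supermetric \eqref{ss} is Hermitian with K\"ahler potential $-2\log Y$, i.e.\ ${}_AH_{\bar B}=-2\,\partial_A\partial_{\bar B}\log Y$, and for such a metric the super-Ricci tensor is $R_{A\bar B}=-\partial_A\partial_{\bar B}\log\mathrm{Sdet}({}_AH_{\bar B})$. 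Since $\mathrm{Sdet}({}_AH_{\bar B})=-1/(4Y^{2})$, one gets $R_{A\bar B}=2\,\partial_A\partial_{\bar B}\log Y=-{}_AH_{\bar B}$ on the nose, with Einstein constant $-1$, no Christoffel symbols needed. Your alternative --- Koszul formula plus graded Ricci contraction, localized at a base point by homogeneity --- would work but is exactly the ``main obstacle'' you identify; note also that your irreducibility argument only yields a proportionality constant on the even block and one on the odd block separately, so the equality of the two constants (which you concede is the crux) still has to come from an explicit evaluation, whereas the K\"ahler-potential identity delivers it for free. If you retain your route, either carry out that evaluation or replace it by the observation that the metric is $\partial\bar\partial$ of a potential whose superdeterminant is again a power of $Y$.
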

\begin{proof}
The supermetric \eqref{ss} is associated to the Hermitian supermetric
\begin{align*}
 dX_A({_AH_{\bar B}}) {_{\bar B}dX}&=-2dX_A(\frac{\partial^2}{\partial X_{A}\partial X_{\bar B}}\log Y){_{\bar B}dX}
 \\&=-2(-1)^{(|X_A|+|X_{ \bar B}|+|X_A||X_{\bar  B}|)}(\frac{\partial^2}{\partial X_{ A}\partial X_{ \bar B}}\log Y)dX_A dX_{ \bar B}\\
                        &=\frac{1}{Y^2}(dZ d\bar Z-i\Theta dZ d\bar \Theta-i\bar \Theta d\Theta d\bar Z-(2Y+\Theta\bar \Theta)d\Theta d\bar \Theta),
\end{align*}
where  $\{X_A\}=\{Z,\Theta\},\{X_{\bar A}\}=\{\bar Z,\bar \Theta\}$ and ${_{\bar B}dX}=(-1)^{|\bar B|}dX_{\bar B}$. It  has super-Ricci curvature \cite{gm}
$$R_{A\bar B}=-\frac{\partial^2}{\partial X_A\partial X_{\bar B}}(\log\mathrm{ Sdet}({_AH_{\bar B}}))=-{_AH_{\bar B}},$$
where \begin{align*}
        \mathrm{ Sdet}({_AH_{\bar B}})&=\mathrm{Sdet}\left(
                                                          \begin{array}{cccc}
                                                            0 & \frac{1}{2Y^2} & 0 &\frac{i\Theta}{2Y^2} \\
                                                            \frac{1}{2Y^2} &0 &\frac{i\bar \Theta}{2Y^2}  &0  \\
                                                            0 & \frac{i\bar \Theta}{2Y^2} & 0& \frac{2Y+\Theta\bar \Theta}{2Y^2} \\
                                                            \frac{i\Theta}{2Y^2} & 0 & \frac{-2Y-\Theta\bar \Theta}{2Y^2} & 0 \\
                                                          \end{array}
                                                        \right)\\
                          &=-\frac{1}{4Y^2}.
      \end{align*}
The super-volume form $d\mathrm{SVol}=\sqrt{| \mathrm{ Sdet}({_AH_{\bar B}})|}dZd\bar Zd\Theta d\bar\Theta=\frac{dZd\bar Zd\Theta d\bar\Theta}{2Y}$ is  $OSp(1|2,\mathbb{R})$-invariant.

To show the $OSp(1|2,\mathbb{R})$-invariance of the given supermetric, we firstly note  that the super-M\"{o}bius transformations are generated by the following transformations
\begin{align*}
  T_1:&\ \  (Z,\Theta)\mapsto (aZ+b, \Theta),\\
  T_2:&\ \ (Z,\Theta)\mapsto (-\frac{1}{Z},\frac{\Theta}{Z}),\\
  T_3:&\ \ (Z,\Theta)\mapsto (Z-\alpha Z\Theta,\Theta+\alpha Z),\\
   T_4:&\ \ (Z,\Theta)\mapsto (Z-\beta\Theta,\Theta+\beta).
\end{align*}
Therefore, we only need to check the invariance under $T_i,i=1,\cdots, 4$, which can be done easily.\qed
\end{proof}

Next we consider the case of $p=3$. One takes
$$\mathcal{H}^{3|2}=\{(x,y,t;\theta_1,\theta_2): x,y,t  \in (\Lambda^\infty_{\mathbb{R}})_0,  \epsilon_0(t)>0,\theta_1,\theta_2\in (\Lambda^\infty_{\mathbb{R}})_1\} $$
as the subspace of $$\mathbb{C}\mathcal{H}^{2|2}:\{(Z,T;\Theta_1,\Theta_2:Z,T\in  (\Lambda^\infty_{\mathbb{C}})_0, \Theta_1,\Theta_2\in  (\Lambda^\infty_{\mathbb{C}})_1, \epsilon_0(\mathrm{Im}(T))>0\},$$  and then one introduces
\begin{align*}
 \tilde Z&=x+iy+jt \ (\textrm{or } \tilde Z= Z+jT),\\
 \tilde \Theta&=j\theta_1+\theta_2 \ (\textrm{or }\tilde \Theta=j\Theta_1+\Theta_2),\\
 \tilde Y&=-j\frac{\tilde Z-\bar{\tilde Z}}{4}-\frac{\tilde Z-\bar{\tilde Z}}{4}j+\frac{1}{2}\tilde\Theta\bar{\tilde \Theta},
\end{align*}
   where the imaginary unit $j$ satisfies $j^2=-1, ij+ji=0$. The $ OSp(1|2,\mathbb{C})$-transformations on $\mathbb{C}\mathcal{H}^{2|2}$ can be obtained  by the super-Poinc\'{a}re extension of those on  $\mathcal{H}^{2|2}$. More precisely, replacing $Z$ by $\tilde Z$ and $\Theta$ by $\tilde \Theta$ in the previous super-M\"{o}bius transformations, we get the transformations
  \begin{align*}
   Z&\mapsto \frac{(aZ+b+(\alpha b-\beta a)\Theta_2)\overline{(cZ+e+(\alpha e-\beta c)\Theta_2)}}{|cZ+e+(\alpha e-\beta c)\Theta_2|^2+|cT+(\alpha e-\beta c)\Theta_1|^2}\\
   &\ \ \ \ \ \ +\frac{(aT+(\alpha b-\beta a)\Theta_1)\overline{(cT+(\alpha e-\beta c)\Theta_1)}}{|cZ+e+(\alpha e-\beta c)\Theta_2|^2+|cT+(\alpha e-\beta c)\Theta_1|^2},\\
   T&\mapsto \frac{(1+\alpha\beta)T+(\alpha b-\beta a)\Theta_1(cZ+e+(\alpha e-\beta c)\Theta_2)}{|cZ+e+(\alpha e-\beta c)\Theta_2|^2+|cT+(\alpha e-\beta c)\Theta_1|^2}\\
   &\ \ \ \ \ \ -\frac{(\alpha e-\beta c)\Theta_1(aZ+b+(\alpha b-\beta a)\Theta_2)}{|cZ+e+(\alpha e-\beta c)\Theta_2|^2+|cT+(\alpha e-\beta c)\Theta_1|^2},\\
   \Theta_1&\mapsto \frac{(\alpha d-\beta c)T+(1-\alpha\beta)\Theta_1(cZ+e+(\alpha e-\beta c)\Theta_2)}{|cZ+e+(\alpha e-\beta c)\Theta_2|^2+|cT+(\alpha e-\beta c)\Theta_1|^2}\\
   & \ \ \ \ \ \ -\frac{(\alpha e-\beta c)\Theta_1(\alpha Z+\beta+(1-\alpha\beta)\theta_2)}{|cZ+e+(\alpha e-\beta c)\Theta_2|^2+|cT+(\alpha e-\beta c)\Theta_1|^2}\\
   \Theta_2&\mapsto \frac{(\alpha Z+\beta+(1-\alpha\beta)\Theta_2)\overline{(cZ+e+(\alpha e-\beta c)\Theta_2)}}{|cZ+e+(\alpha e-\beta c)\Theta_2|^2+|cT+(\alpha e-\beta c)\Theta_1|^2}\\
   &\ \ \ \ \ \ +\frac{(\alpha T+(1-\alpha\beta)\Theta_1)\overline{(cT+(\alpha e-\beta c)\Theta_1)}}{|cZ+e+(\alpha e-\beta c)\theta_2|^2+|cT+(\alpha e-\beta c)\Theta_1|^2}.
  \end{align*}
 The  element  $\Gamma\in OSp(1|2,\mathbb{C})$ preserving $\mathcal{H}^{3|2}$ is called an $\mathbb{R}$-element. The maximal super subgroup of  $OSp(1|2,\mathbb{C})$ consisting of the $\mathbb{R}$-elements is denoted by $\mathcal{H}$.
  \begin{proposition} \begin{enumerate}
                        \item $\mathcal{H}^{3|2}$ can be equipped with an $\mathcal{H}$-invariant supermetric such that it is a negative Einstein supermanifold.
                        \item $\mathcal{H}^{3|2}$ can be equipped with a supermetric such that it is a positive Bosonic supermanifold.
                        \item $(\frac{1}{t^2}+\frac{3\theta_1\theta_2}{t^3})dxdydtd\theta_1d\theta_2$ is an $\mathcal{H}$-invariant super-volume form    on $\mathcal{H}^{3|2}$.
  \end{enumerate}
 \end{proposition}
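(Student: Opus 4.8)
The plan is to imitate, step by step, the proof of the preceding proposition for $\mathcal{H}^{2|2}=\mathbb{C}\mathcal{H}^{1|1}$, replacing the complex coordinate $Z$, the odd coordinate $\Theta$ and the function $Y$ by the quaternionic coordinate $\tilde Z=x+iy+jt$, the odd coordinate $\tilde\Theta=j\theta_1+\theta_2$, and the function $\tilde Y$ introduced above, whose body is $\epsilon_0(t)>0$. Two general facts are used throughout. First, $\mathcal{H}^{3|2}\subset\mathbb{C}\mathcal{H}^{2|2}$, so constructions on the latter restrict to the former. Second, by the discussion preceding the statement the $\mathbb{R}$-elements of $OSp(1|2,\mathbb{C})$ act on $\mathcal{H}^{3|2}$ through the super-Poincar\'{e} extensions of the super-M\"{o}bius transformations, and since the latter are generated by (the extensions of) $T_1,T_2,T_3,T_4$, every $\mathcal{H}$-invariance assertion reduces to checking four explicit coordinate changes.

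For (1), I would take as candidate the supermetric built from the ``K\"{a}hler-type potential'' $\log\tilde Y$, exactly as for $p=2$: the $\tilde Y^{-2}$-rescaling of the flat form $dx^2+dy^2+dt^2$ together with the odd cross terms (the quaternionic analogue of the numerator of \eqref{ss}), written out in the real coordinates $(x,y,t;\theta_1,\theta_2)$ and restricted to $\mathcal{H}^{3|2}$. Its $\mathcal{H}$-invariance should follow, just as in the proof above, from a transformation law of the form $\tilde Y\mapsto|F_\Gamma|^2\tilde Y$ with a quaternion-valued factor $F_\Gamma$, applied to the four generators: translations and dilations are immediate, and for the odd generators $T_3,T_4$ and the inversion $T_2$ one substitutes the super-Poincar\'{e}-extended formulas recorded before the statement and simplifies. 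To see that the metric is \emph{negative} Einstein, I would compute the super-Ricci tensor, most efficiently via the associated Hermitian supermetric ${_AH_{\bar B}}$ and the identity $R_{A\bar B}=-\partial_A\partial_{\bar B}\log\mathrm{Sdet}({_AH_{\bar B}})$ already used for $p=2$, obtaining $R_{AB}=c\,g_{AB}$ with a negative constant $c$; alternatively one evaluates $\Gamma^C_{AB}$ and $R_{AB}$ directly from the super-curvature formulas recorded in Section~1.

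Part (2) is weaker, in that no invariance is required: since $\epsilon_0(R)$ is determined entirely by the body metric, it suffices to equip $\mathcal{H}^{3|2}$ with \emph{any} supermetric whose body carries a metric of constant positive scalar curvature. I would use the super semisphere model $\mathcal{H}^{3|2}=\{H^t(\mathrm{Id}_4\oplus J_1)H=1,\ \epsilon_0(x_0)>0\}$ and take the restriction of the ambient \emph{undivided} flat supermetric $dx_0^2+dx_1^2+dx_2^2+dx_3^2+d\theta_1d\theta_2$, rather than its $x_0^{-2}$-rescaling (which yields the hyperbolic metric); the body is then an open hemisphere of the round three-sphere, with scalar curvature identically $6$, so $\epsilon_0(R)=6>0$. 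For (3), the robust route is to verify the invariance of $\omega=(t^{-2}+3t^{-3}\theta_1\theta_2)\,dx\,dy\,dt\,d\theta_1\,d\theta_2$ directly against the four generating transformations: for each one computes the Jacobian supermatrix, takes its Berezinian, and checks that this Berezinian times the transformed density factor returns $\omega$, the only laborious case being the quaternionic inversion, where one expands the relevant inverses in the nilpotents $\theta_1,\theta_2$ using $\epsilon_0(t)>0$. Since $\mathcal{H}^{3|2}$ is homogeneous under $\mathcal{H}$, an $\mathcal{H}$-invariant super-volume form is unique up to a constant, so $\omega$ is necessarily a constant multiple of the canonical super-volume form $\sqrt{|\mathrm{Sdet}({_AH_{\bar B}})|}\,dx\,dy\,dt\,d\theta_1\,d\theta_2$ attached to the metric of (1); the factor $t^{-2}+3t^{-3}\theta_1\theta_2$ is the $3|2$-dimensional counterpart of the factor $x_0^{-1}+x_0^{-2}\theta_1\theta_2$ appearing for $p=2$, and Berezin-integration over $\theta_1,\theta_2$ returns a nonzero constant multiple of the hyperbolic volume form $t^{-3}\,dx\,dy\,dt$ of the body.

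The step I expect to be the main obstacle is the curvature computation in part (1). Because $\mathcal{H}^{3|2}$ carries a quaternionic rather than a genuine complex structure (the unit $j$ anticommutes with $i$), the clean K\"{a}hler-type formula for the super-Ricci tensor is not available verbatim; it must either be justified by presenting $\mathcal{H}^{3|2}$ as an appropriate real slice of the complex supermanifold $\mathbb{C}\mathcal{H}^{2|2}$, where it does hold, or else replaced by a direct and lengthy evaluation of the super-Christoffel symbols $\Gamma^C_{AB}$ and of $R_{AB}$ from the formulas of Section~1. A secondary, more bookkeeping-heavy difficulty, shared by (1) and (3), is the verification of $\mathcal{H}$-invariance under the inversion generator, where the interplay of $i$, $j$ and the odd coordinates in $\tilde Z\mapsto-\tilde Z^{-1}$ and the induced transformation of $\tilde\Theta$ must be tracked carefully.
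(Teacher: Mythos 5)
Your treatment of parts (1) and (3) is essentially the paper's: the authors write down the explicit supermetric
$ds^2=\bigl(dx^2+dy^2+dt^2-2(\theta_2d\theta_2+\theta_1d\theta_1)dx+2(\theta_1d\theta_2-\theta_2d\theta_1)dt-4(t-2\theta_1\theta_2)d\theta_1d\theta_2\bigr)/(t^2-2t\theta_1\theta_2)$
obtained by super-Poincar\'{e} extension, and obtain the volume form of (3) as $\sqrt{|\mathrm{Sdet}|}$ of its coefficient supermatrix, which is in substance what you propose (your generator-by-generator check of the volume form is more laborious but legitimate).

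The genuine gap is in part (2). Your shortcut rests on the claim that ``$\epsilon_0(R)$ is determined entirely by the body metric,'' and that claim is false; in fact the paper's own computation is a counterexample to it. The super scalar curvature is the full contraction $R=R_{AB}g^{AB}$, and the odd--odd block contributes terms such as $R_{45}g^{54}$ in which \emph{both} factors have nonvanishing zero-order parts (the coefficient of $d\theta_1d\theta_2$ in the metric, hence its inverse, is an even Grassmann element with nonzero body, e.g.\ $\sim t^{-1}$, and the paper finds $R_{45}=-\tfrac{16}{t}+\cdots$). Consequently $\epsilon_0(R)$ is \emph{not} the scalar curvature of the body metric: the paper takes a supermetric whose body is the hyperbolic metric $\frac{dx^2+dy^2+dt^2}{t^2}$ (body scalar curvature $-6$), flips the signs of the odd terms relative to the metric of part (1), and then computes all super-Christoffel symbols and the super-Ricci tensor to find $R=2-\tfrac{27\theta_1\theta_2}{t}$, so $\epsilon_0(R)=2>0$. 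That the odd sector can overturn the sign of the body curvature is precisely the content of part (2), and it is exactly what your argument assumes away. Your proposed semisphere metric might still work, but you would have to carry out the same kind of explicit computation of $\Gamma^C_{AB}$ and $R_{AB}$ for the induced supermetric on the quadric $\sum x_i^2+\theta_1\theta_2=1$, including the odd--odd contraction; appealing to the round hemisphere's scalar curvature $6$ proves nothing about $\epsilon_0(R)$.
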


\begin{proof} By  super-Poinc\'{a}re extension described above, the $\mathcal{H}$-invariant and negative Einstein supermetric on $\mathcal{H}^{3|2}$ can  be  given by
  \begin{align*}
    ds^2=\frac{dx^2+dy^2+dt^2-2(\theta_2d\theta_2+\theta_1d\theta_1)dx+2(\theta_1d\theta_2-\theta_2d\theta_1)dt-4(t-2\theta_1\theta_2)d\theta_1d\theta_2}{t^2-2t\theta_1\theta_2},
  \end{align*}
which gives rise to  an $\mathcal{H}$-invariant super-volume form
  \begin{align*}
   &\sqrt{\left|\mathrm{Sdet}\left(
                                                          \begin{array}{ccccc}
                                                            \frac{1}{t^2}+\frac{2\theta_1\theta_2}{t^3} & 0 &0 & -\frac{\theta_1}{t^2}  &-\frac{\theta_2}{t^2}\\
                                                            0 &\frac{1}{t^2}+\frac{2\theta_1\theta_2}{t^3} &0&0  &0  \\
                                                            0&0  &-\frac{1}{t^2}+\frac{2\theta_1\theta_2}{t^3}&-\frac{\theta_2}{t^2}  &-\frac{\theta_1}{t^2} \\
                                                             -\frac{\theta_1}{t^2}&0 & -\frac{\theta_2}{t^2}& 0 &-\frac{2}{t}   \\
                                                           \frac{\theta_2}{t^2}& 0 & \frac{\theta_1}{t^2} &-\frac{2}{t} & 0 \\
                                                          \end{array}
                                                        \right)\right|
                       }dxdydtd\theta_1d\theta_2\\
   =&\frac{1}{2}(\frac{1}{t^2}+\frac{3\theta_1\theta_2}{t^3})dxdydtd\theta_1d\theta_2.
  \end{align*}

  We can also consider the following supermetric
  \begin{align*}
    ds^2=\frac{dx^2+dy^2+dt^2+2(\theta_2d\theta_2+\theta_1d\theta_1)dx-2(\theta_1d\theta_2-\theta_2d\theta_1)dt+4(t-2\theta_1\theta_2)d\theta_1d\theta_2}{t^2-2t\theta_1\theta_2}.
  \end{align*}
   We calculate the corresponding super scalar curvature.
 In terms of  $X_1=x,X_2=y, X_3=t, X_4=\theta_1, X_5=\theta_2$,
 the nonzero super-Christoffel symbols are given by
\begin{align*}
 &\Gamma_{11}^3=-\frac{1}{t}+\frac{\theta_1\theta_2}{t^2},\Gamma_{11}^4=\frac{\theta_1}{t^2},\Gamma_{11}^5=\frac{\theta_2}{t^2},
\Gamma_{13}^1=\Gamma_{31}^1=-\frac{1}{t}+\frac{\theta_1\theta_2}{t^2},\\&
 \Gamma_{14}^1=\Gamma_{41}^1=\frac{3\theta_2}{2t},\Gamma_{14}^3=\Gamma_{41}^3=\frac{\theta_1}{2t},\Gamma_{14}^5=\Gamma_{41}^5=-\frac{1}{2t}-\frac{\theta_1\theta_2}{2t^2},\\ &\Gamma_{15}^1=\Gamma_{51}^1=-\frac{3\theta_1}{2t},\Gamma_{15}^3=\Gamma_{51}^3=\frac{\theta_2}{2t}, \Gamma_{15}^4=\Gamma_{51}^4=\frac{1}{2t}+\frac{\theta_1\theta_2}{2t^2},\\&
\Gamma_{22}^3=-\frac{1}{t}+\frac{\theta_1\theta_2}{t^2},\Gamma_{22}^4=\frac{\theta_1}{t^2},\Gamma_{22}^5=\frac{\theta_2}{t^2},\\
&  \Gamma_{23}^2=\Gamma_{32}^2=-\frac{1}{t}-\frac{\theta_1\theta_2}{t^2},
 \Gamma_{24}^2=\Gamma_{42}^2=\frac{\theta_2}{t}, \Gamma_{25}^2=\Gamma_{52}^2=-\frac{\theta_1}{t},\\&
 \Gamma_{33}^3=-\frac{1}{t}+\frac{\theta_1\theta_2}{t^2}, \Gamma_{33}^4=\frac{\theta_1}{t^2},\Gamma_{33}^5=-\frac{\theta_2}{t^2}, \\&
\Gamma_{34}^1=\Gamma_{43}^1=-\frac{3\theta_1}{2t},\Gamma_{34}^3=\Gamma_{43}^3=\frac{\theta_2}{2t},
\Gamma_{34}^4=\Gamma_{43}^4=\frac{1}{2t}-\frac{\theta_1\theta_2}{2t^2},\\&\Gamma_{35}^1=\Gamma_{53}^1=-\frac{\theta_2}{2t},\Gamma_{35}^3=\Gamma_{53}^1=-\frac{\theta_1}{2t},\Gamma_{35}^5=\Gamma_{53}^1=\frac{1}{2t}+\frac{\theta_1\theta_2}{2t^2},\\& \Gamma_{45}^3=-\Gamma_{54}^3=2-\frac{6\theta_1\theta_2}{t},\Gamma_{45}^4=-\Gamma_{54}^4=\frac{\theta_1}{t},\Gamma_{45}^5=-\Gamma_{54}^5=\frac{\theta_2}{t}.
 \end{align*}
 Hence the non-vanishing components of the super-Ricci curvature are given by
                            \begin{align*}
                              & R_{11}=-\frac{11}{2t^2}-\frac{13\theta_1\theta_2}{t^3}, \ R_{13}=R_{31}=\frac{\theta_1\theta_2}{2t^3}, \  R_{14}=R_{41}=\frac{3\theta_1}{2t^2},\\
                                & R_{15}=R_{51}=\frac{3\theta_2}{2t^2},\ R_{22}=-\frac{5}{t^2}-\frac{8\theta_1\theta_2}{t^3},\ R_{33}=-\frac{7}{2t^2}-\frac{17\theta_1\theta_2}{t^3},\\
                              &  R_{34}=R_{43}=\frac{\theta_2}{2t^2},\ R_{35}=R_{53}=-\frac{\theta_1}{2t^2},\ R_{45}=-R_{54}=-\frac{16}{t}+\frac{25\theta_1\theta_2}{t^2}.
                          \end{align*}
                          Therefore,  the super scalar curvature reads $$R=2-\frac{27\theta_1\theta_2}{t},$$
                     which means $\mathcal{H}^{3|2}$ can be made into  a positive Bosonic supermanifold.\qed  \end{proof}

  As  the end of this section, we mention another important $3|2$-dimensional hyperbolic supermanifold, the supergroup $OSp(1|2,\mathbb{R})$ whose body is the  non-compact Lie group $SL(2,\mathbb{R})$ closely  related to the BTZ black hole in $\mathrm{AdS}_3$ gravity \cite{mm,hh}.
  The basis  of the corresponding Lie superalgebra $osp(1|2,\mathbb{R})$ are given by three
  even generators
  \begin{align*}
   L_1&=\left(
         \begin{array}{ccc}
           0 & 1 & 0 \\
           -1 & 0 & 0 \\
           0 & 0 & 0 \\
         \end{array}
       \right),
     \ \   L_2=\left(
         \begin{array}{ccc}
           0 & 1 & 0 \\
           1 & 0 & 0 \\
           0 & 0 & 0 \\
         \end{array}
       \right),
      \ \  L_3=\left(
         \begin{array}{ccc}
           1 & 0 & 0 \\
           0 & -1& 0 \\
           0 & 0 & 0 \\
         \end{array}
       \right),
  \end{align*}
  and two odd generators
  \begin{align*}
    Q_1=\left(
         \begin{array}{ccc}
           0 & 0 & 1 \\
           0 & 0 & 1 \\
           1 & -1 & 0 \\
         \end{array}
       \right),
      \ \  Q_2=\left(
         \begin{array}{ccc}
           0 & 0 & 1 \\
           0 & 0 & -1 \\
           -1 & -1 & 0 \\
         \end{array}
       \right).
  \end{align*}
  They satisfy the following (anti-)commutative relations:
  \begin{align*}
    [L_i,L_j]&=2\epsilon_{ijl}\eta^{kl}L_l,\\
    [L_i,Q_\alpha]&=(\sigma_i)_{\alpha\beta}Q_\beta,\\
    \{Q_\alpha,Q_\beta\}&=2(C\sigma_i)_{\alpha\beta}L_i,
  \end{align*}
  where the indices $i,j,k$ run over $1,2,3$, and $\alpha,\beta$ run over $1,2$, and $\{\sigma_i\}$ denote the Pauli matrices, i.e.
  $\sigma_1=\left(
              \begin{array}{cc}
                0 & 1\\
                -1 & 0\\
              \end{array}
            \right)$, $\sigma_2=\left(
              \begin{array}{cc}
                1 & 0\\
                0 & -1\\
              \end{array}
            \right)$ and $\sigma_3=\left(
              \begin{array}{cc}
                0 & 1\\
                1 & 0\\
              \end{array}
            \right)$, $C=(\epsilon_{\alpha\beta})$.
  Let $\mathrm{Str}$ denote the super-Killing form on $osp(1|2,\mathbb{R})$, then
  \begin{align*}
    \mathrm{Str}(L_i,L_j)=\eta_{ij}, \ \ \mathrm{Str}(L_i,Q_\alpha)=0, \ \ \mathrm{Str}(Q_\alpha,Q_\beta)=-2C_{\alpha\beta}.
  \end{align*}
  We parameterize the elements  of $OSp(1|2,\mathbb{R})$ by
  \begin{align}\label{2}
   g=\exp{(\alpha L_2)}\exp{(\lambda L_3)}\exp{(\beta L_2)}\exp{(\theta_1 R_1)}\exp{(\theta_2 R_2)}
  \end{align}
  with  $\alpha,\beta,\lambda\in (\Lambda^\infty_{\mathbb{R}})_0$, $\epsilon_0(\alpha),\epsilon_0(\beta)\in[0,2\pi),-\infty<\epsilon_0(\lambda)<+\infty$ and $\theta_1,\theta_2\in (\Lambda^\infty_{\mathbb{R}})_1$, where $R_{1,2}=\frac{1}{2}(Q_1\pm Q_2)$.

The Lie supergroup $OSp(1|2,\mathbb{R})$ can be endowed with the  following pseudo-supermetric (where the phrase "pseudo" means that the $(\Lambda_\mathbb{R}^\infty)_0$-component gives  rise to a pseudo-metric with signature $(-1,1,1)$ on the body manifold) invariant under the $OSp(1|2,\mathbb{R})$ left-action and the $SL(2,\mathbb{R})\times SL(2,\mathbb{R})$ bi-action
       \begin{align*}
        ds^2
        &=(1+2\theta_1\theta_2)(d\alpha^2+d\lambda^2+d\beta^2
        +2\cosh2\lambda d\alpha d\beta)\\&\ \ \ \ +(\theta_1\cosh2\beta\sinh2\lambda+\theta_1\cosh2\lambda+2\theta_2\sinh2\lambda\sinh2\beta)d\alpha d\theta_1\\
        & \ \ \ \ +\theta_1d\beta d\theta_1-(\theta_1\sinh2\beta+2\theta_2\cosh2\beta)d\lambda d\theta_1\\
        & \ \ \ \ +\theta_2(\cosh2\beta\sinh2\lambda-\cosh2\lambda)d\alpha d\theta_2-\theta_2d\beta d\theta_2\\
         & \ \ \ \ -\theta_2\sinh2\beta d\lambda d\theta_2-(1-\theta_1\theta_2)d\theta_1d\theta_2,
       \end{align*}
       and the associated $OSp(1|2,\mathbb{R})\times OSp(1|2,\mathbb{R})$ bi-invariant super-volume form
       \begin{align*}
              &d\mathrm{SVol}\\
              = &\sqrt{\left |\det \left(
                                                 \begin{array}{ccc}
                                                   1+\theta_1\theta_2 (1+\sinh^22\beta\sinh^22\lambda)& - \frac{\theta_1\theta_2}{2}\sinh4\beta\sinh2\lambda & (1+\theta_1\theta_2)\cosh2\lambda \\
                                                    -\frac{\theta_1\theta_2}{2}\sinh4\beta\sinh2\lambda& 1+\theta_1\theta_2 (1+\sinh^22\beta)&  0 \\
                                                    (1+\theta_1\theta_2)\cosh2\lambda& 0&  1+\theta_1\theta_2\\
                                                 \end{array}
                                               \right)
              \right |}\\
              &\cdot\sqrt{\left | \det\left(
                                        \begin{array}{cc}
                                          0 & -2(1+\theta_1\theta_2) \\
                                          2(1+\theta_1\theta_2) & 0 \\
                                        \end{array}
                                      \right)
              \right |}d\alpha d\lambda d\beta d\theta_1 d\theta_2\\
              =&2(1+3\theta_1\theta_2)\sinh2\lambda d\alpha d\lambda d\beta d\theta_1 d\theta_2.
                                                                                                                        \end{align*}
      Indeed, with the given parametrization \eqref{2},  we have the current
  $$g^{-1}dg=e^iL_i+\mathcal{E}^\alpha Q_\alpha,$$
  where\begin{align*}
         e^1&=-\cosh2\beta\sinh2\lambda d\alpha-\theta_1\theta_2(\sinh2\lambda\cosh2\beta+\cosh2\lambda)d\alpha\\
         &\ \ \ \ \ +(1+\theta_1\theta_2)\sinh2\beta d\lambda-\theta_1\theta_2d\beta+\frac{\theta_1}{2}d\theta_1+\frac{\theta_2}{2}d\theta_2,\\
         e^2&=\cosh2\lambda d\alpha+\theta_1\theta_2(\sinh2\lambda\cosh2\beta+\cosh2\lambda)d\alpha\\
         &\ \ \ \ \ -\theta_1\theta_2\sinh 2\beta d\lambda+(1+\theta_1\theta_2)d\beta+\frac{\theta_1}{2}d\theta_1-\frac{\theta_2}{2}d\theta_2,\\
          e^3&=-(1+\theta_1\theta_2)\sinh2\beta\sinh2\lambda d\alpha+(1+\theta_1\theta_2)\cosh2\beta d\lambda-\theta_2d\theta_1,\\
         \mathcal{E}^1&=[\frac{\theta_1-\theta_2}{2}\sinh 2\lambda(\cosh 2\beta-\sinh 2\beta)+\frac{\theta_1+\theta_2}{2}\cosh2\lambda]d\alpha\\
         &\ \ \ \ \ +\frac{\theta_1-\theta_2}{2}(\cosh 2\beta-\sinh 2\beta)d\lambda+\frac{\theta_1+\theta_2}{2}d\beta+\frac{1-\theta_1\theta_2}{2}d\theta_1+\frac{1}{2}d\theta_2,\\
         \mathcal{E}^2&=-[\frac{\theta_1+\theta_2}{2}\sinh 2\lambda(\cosh 2\beta+\sinh 2\beta)+\frac{\theta_1-\theta_2}{2}\cosh2\lambda]d\alpha\\
         &\ \ \ \ \ +\frac{\theta_1+\theta_2}{2}(\cosh 2\beta+\sinh 2\beta)d\lambda-\frac{\theta_1-\theta_2}{2}d\beta+\frac{1-\theta_1\theta_2}{2}d\theta_1-\frac{1}{2}d\theta_2,
       \end{align*}then the super-Killing form provides  a pseudo-supermetric $$ds^2=\mathrm{Str}(g^{-1}dg,g^{-1}dg)=-(e^1)^2+(e^2)^2+(e^3)^2+2\mathcal{E}^1\mathcal{E}^2-2\mathcal{E}^2\mathcal{E}^1,$$
 which  obviously has the desired invariance.

 The renormalized volume of a hyperbolic manifold is a quantity motivated by the AdS/CFT correspondence  and can be computed via  certain regularization procedure \cite{ks}.
 \begin{proposition}  With respect to the  above pseudo-supermetric,  $OSp(1|2,\mathbb{R})$ has the renormalized volume $-24\pi^2$.

  \end{proposition}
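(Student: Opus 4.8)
The plan is to evaluate $\int_{OSp(1|2,\mathbb{R})} d\mathrm{SVol}$ by first integrating out the odd coordinates and then applying the classical renormalized-volume prescription to the resulting density on the body. Since the Berezin integral $\int d\theta_1\, d\theta_2$ kills the constant term of a function of $\theta_1,\theta_2$ and returns the coefficient of $\theta_1\theta_2$, the factor $2(1+3\theta_1\theta_2)$ in $d\mathrm{SVol}=2(1+3\theta_1\theta_2)\sinh 2\lambda\, d\alpha\, d\lambda\, d\beta\, d\theta_1\, d\theta_2$ contributes only through its $3\theta_1\theta_2$-part, so the super-integral reduces to $6$ times an ordinary integral of $\sinh 2\lambda\, d\alpha\, d\lambda\, d\beta$ over the body $SL(2,\mathbb{R})$ (the overall sign being fixed by the Berezin orientation convention). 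In the parametrization \eqref{2} the body is swept out by $\epsilon_0(\alpha),\epsilon_0(\beta)\in[0,2\pi)$ and $\epsilon_0(\lambda)\in(-\infty,+\infty)$, so the two angular integrations contribute a factor $(2\pi)^2=4\pi^2$, while the radial integral diverges — as it must, $SL(2,\mathbb{R})\cong\mathrm{AdS}_3$ having infinite volume.

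The next step is to regularize as in the procedure of \cite{ks} that underlies the AdS/CFT volume computations of \cite{mm}. Since the honest (positive) volume density is $6\,|\sinh 2\lambda|$ and $\sinh 2\lambda$ changes sign at $\lambda=0$, there are two asymptotic ends $\lambda\to\pm\infty$, and one truncates at the equidistant hypersurfaces $\{\lambda=\pm L\}$. Then $\int_{-L}^{L}6\,|\sinh 2\lambda|\, d\lambda = 2\int_{0}^{L}6\sinh 2\lambda\, d\lambda = 6\cosh 2L-6$; expanding as $L\to\infty$, the exponentially divergent ``area'' term $6\cosh 2L\sim 3e^{2L}$ is subtracted off and the $L$-independent remainder $-6$ is, by definition, the renormalized radial volume. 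Multiplying by the angular factor yields the renormalized super-volume $4\pi^2\cdot(-6)=-24\pi^2$, as claimed.

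Two points need care, and the second is where the real work lies. First, the body metric has Lorentzian signature $(-1,1,1)$, so ``volume'' and its renormalization are to be understood after the standard analytic continuation to Euclidean signature, exactly as for the BTZ and Krasnov black holes in \cite{mm}; one should check that this continuation is compatible with the bi-invariant structure and that the $\lambda$-integration contour is the correct one. Second — the main obstacle — one must show that the renormalization is scheme-independent in the super setting: that ``Berezin-integrate first, then apply the recipe of \cite{ks} to the body density'' is well-defined (in particular, that the odd directions contribute no anomalous finite term) and that the coordinate cutoff $\{\lambda=\pm L\}$ is asymptotically the geometric one (geodesic distance from the core), so the subtracted quantity really is the universal area divergence. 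It is this bookkeeping — notably the factor of two from the two ends $\lambda\to\pm\infty$ and from using $|\sinh 2\lambda|$ — that fixes the finite part at $-24\pi^2$ rather than $-12\pi^2$.
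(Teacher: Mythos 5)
Your proposal reaches the correct value and shares the paper's first two steps --- the Berezin integration producing the overall factor $6$, and the angular factor $(2\pi)^2$ from $\alpha_0,\beta_0\in[0,2\pi)$ --- but it regularizes the divergent radial integral by a genuinely different scheme. You impose a hard cutoff at $\lambda=\pm L$, compute $\int_{-L}^{L}6\,|\sinh 2\lambda|\,d\lambda=6\cosh 2L-6$, discard the exponentially growing ``area'' term $3e^{2L}$, and keep the finite remainder $-6$, in the geometric spirit of Krasnov--Schlenker. The paper instead substitutes $\lambda_0=\ln(2/t)$ to convert the radial integral into an Euler Beta integral $\int_0^2(1-\frac{t^4}{16})(\frac{t^4}{16})^{(z-6)/4}\,dt$, evaluates it as a ratio of Gamma functions where it converges, and analytically continues to $z=0$, where $\Gamma(\frac{z}{4}-\frac{1}{2})\to\Gamma(-\frac{1}{2})=-2\sqrt{\pi}$ supplies the negative finite part; both schemes land on $-24\pi^2$. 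Your route is more elementary and makes the sign transparent (it is literally the $-6$ left over after the subtraction), while the paper's analytic continuation avoids choosing a cutoff hypersurface altogether --- which matters because the scheme-independence you correctly flag as the main gap is not addressed in the paper either; the agreement of the two inequivalent regularizations is, as things stand, the best available evidence that $-24\pi^2$ is robust. One further remark in your favor: the paper integrates $\sinh 2\lambda_0$ over the full line rather than $|\sinh 2\lambda_0|$, and its intermediate prefactor $-96\pi^2$ and final sign bookkeeping are internally loose, so your explicit treatment of the two ends $\lambda\to\pm\infty$ and of the absolute value is the cleaner accounting of where the factor of two and the overall sign come from.
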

  \begin{proof} The volume of $OSp(1|2,\mathbb{R})$ is calculated as
  \begin{align*}
   \mathrm{Vol}(OSp(1|2,\mathbb{R}))&=\int_{OSp(1|2,\mathbb{R})}d\mathrm{SVol}\\
   &=6\int_0^{2\pi}d\alpha_0\int_0^{2\pi}d\beta_0\int_{-\infty}^\infty\sinh2\lambda_0 d\lambda_0,
  \end{align*}
where $\alpha_0=\epsilon_0(\alpha),\beta_0=\epsilon_0(\beta),\lambda_0=\epsilon_0(\lambda)$. Let $\lambda_0=\ln\frac{2}{t}$ with $0<t\leq2$, then
\begin{align*}
   \mathrm{Vol}(OSp(1|2,\mathbb{R}))&=-96\pi^2\int_0^2\frac{1}{t}(\frac{1}{t}-\frac{t}{4})(\frac{1}{t}+\frac{t}{4})dt\\
   &=-96\pi^2\lim_{z\rightarrow 0}2^{z-4}\int_0^2(1-\frac{t^4}{16})(\frac{t^4}{16})^{\frac{z-6}{4}}\\
   &=-96\pi^2\lim_{z\rightarrow 0}2^{z-4}\frac{\Gamma(2)\Gamma(\frac{z}{4}-\frac{1}{2})}{\Gamma(\frac{z}{4}+\frac{3}{2})}\\
   &=-96\pi^2\frac{\Gamma(2)}{4}=-24\pi^2,
\end{align*}
which gives  the  renormalized volume.\qed
  \end{proof}
  \section{Super-Green Functions and Supergeodesics}

  A super-Riemann surface $S^{1|1}$ is a complex $1|1$-dimensional supermanifold with the following properties in terms of the local coordinate $(Z;\Theta)$ \cite{3,mmm,cr,n}
  \begin{itemize}
    \item (supercomplex structure) the transition functions are holomorphic: $Z^\prime=F(Z,\Theta), \Theta^\prime=\Psi(Z,\Theta)$,
    \item (superconformal structure) the differential operator $\mathbb{D}=\frac{\partial}{\partial \Theta}+\Theta\frac{\partial}{\partial Z}$ transforms homogeneously: $\mathbb{D}^\prime\propto \mathbb{D}$.
  \end{itemize}
  More explicitly, the general form of  transition functions reads
  \begin{align*}
    Z^\prime&=f(Z)+\Theta\psi\sqrt{\frac{\partial f}{\partial Z}},\\
    \Theta^\prime&=\psi(Z)+\Theta\sqrt{\frac{\partial f}{\partial Z}+\psi\frac{\partial \psi}{\partial Z}}.
  \end{align*}
  In other words, the extra structure on  super-Riemann surface $S^{1|1}$ is given by a $0|1$-dimensional  subbundle $\mathcal{D}$ of the tangent bundle $T_{X^{1|1}}$ such that the following sequence is exact
  $$0\rightarrow \mathcal{D}\rightarrow T_{S^{1|1}}\rightarrow \mathcal{D}^2\rightarrow 0.$$
   There are three typical $1|1$-dimensional super-Riemann surfaces with simply-connected bodies:
  \begin{itemize}
    \item the complex superplane ${\mathbb{C}}^{1|1}$;
    \item the super-Riemann sphere $\hat{\mathbb{C}}^{1|1}$: covered by two open domains (in the De Witt topology) which are  glued by the transition functions
  $$(Z^\prime,\Theta^\prime)=(-\frac{1}{Z},\frac{\Theta}{Z});$$
    \item the upper half superplane  $\mathbb{C}\mathcal{H}^{1|1}$ discussed in the previous section.
  \end{itemize}
 The groups of superconformal automorphisms on $\hat{\mathbb{C}}^{1|1}$ and $\mathbb{C}\mathcal{H}^{1|1}$ are   $OSp(1|2,\mathbb{C})/\{\pm \mathrm{Id}\}$ and $OSp(1|2,\mathbb{R})/\{\pm \mathrm{Id}\}$,  respectively.
  \begin{proposition} Let $S^{1|1}$ be a super-Riemann surface with a compact Riemann surface $S$ of genus $g_S\geq 2$ as the body  of $S^{1|1}$. Then the  superconformal structure on $S^{1|1}$ produces irreducible representations $\rho:\pi_1(S)\rightarrow SL(2,\mathbb{R})$ of the fundamental group $\pi_1(S)$ of $S$.
  \end{proposition}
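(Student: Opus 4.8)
The plan is to push everything down to the body $S$, apply classical uniformization there, and extract the $SL(2,\mathbb{R})$‑valued representation from the spin structure that the superconformal structure automatically carries. First, since $S$ is a compact Riemann surface of genus $g_S\ge 2$, the classical uniformization theorem presents $S\cong\mathbb{H}^2/\Gamma_S$ with $\Gamma_S$ a cocompact Fuchsian group, and fixes an isomorphism $\bar\rho\colon\pi_1(S)\cong\Gamma_S\hookrightarrow PSL(2,\mathbb{R})$. Second, I would read off a theta characteristic of $S$ from the data of $S^{1|1}$: the superconformal structure is the $0|1$‑dimensional subbundle $\mathcal{D}\subset T_{S^{1|1}}$ fitting in $0\to\mathcal{D}\to T_{S^{1|1}}\to\mathcal{D}^{\otimes2}\to0$, and reducing modulo nilpotents sends $\mathcal{D}$ to a holomorphic line bundle $L$ on $S$ with $L^{\otimes2}\cong T_S$ (because $T_{S^{1|1}}/\mathcal{D}$ reduces to $T_S$); concretely, in the transition laws $\Theta'=\psi(Z)+\Theta\sqrt{\partial f/\partial Z+\psi\,\partial\psi/\partial Z}$ the leading cocycle $\sqrt{f'}$ is precisely a spin‑structure cocycle on $S$. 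Thus $S^{1|1}$ canonically carries a spin structure $L$ on its body.

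The core step turns $L$ into a lift of $\bar\rho$. Pulling $T_S$ back to the universal cover $\mathbb{H}^2$ yields $T_{\mathbb{H}^2}$, on which $\Gamma_S$ acts through the factor of automorphy $\gamma\mapsto(cz+d)^{-2}$, where $(c,d)$ is the bottom row of $\gamma$. As $\mathbb{H}^2$ is contractible, $\pi^{*}L$ is trivial, so equipping it with a $\Gamma_S$‑action compatible with $L^{\otimes2}\cong T_S$ amounts to choosing, consistently with composition, a square root $\epsilon_\gamma(cz+d)^{-1}$ of $(cz+d)^{-2}$ with $\epsilon_\gamma\in\{\pm1\}$; the cocycle condition for $L$ says exactly that $\gamma\mapsto\rho(\gamma):=\epsilon_\gamma\gamma$ is a homomorphism $\rho\colon\pi_1(S)\to SL(2,\mathbb{R})$ covering $\bar\rho$. (No obstruction arises: the lift‑obstruction in $H^2(\pi_1(S);\mathbb{Z}/2)\cong\mathbb{Z}/2$ is the reduction of $\chi(S)=2-2g_S$, which vanishes; the lifts then form a torsor over $H^1(\pi_1(S);\mathbb{Z}/2)\cong(\mathbb{Z}/2)^{2g_S}$, matching the $2^{2g_S}$ spin structures, which is why the statement reads in the plural, $\rho$ being the lift attached to $L$.) Equivalently, one may instead invoke super‑uniformization \cite{cr,n,mmm} to write $S^{1|1}\cong\mathbb{C}\mathcal{H}^{1|1}/\tilde\Gamma$ for a super‑Fuchsian group $\tilde\Gamma\subset OSp(1|2,\mathbb{R})$ with $\tilde\Gamma\cong\pi_1(S)$, and then take $\rho=\epsilon_0$ of the even $2\times2$ block of each $\Gamma\in\tilde\Gamma$; this lands in $SL(2,\mathbb{R})$ since $\epsilon_0(ae-bc)=\epsilon_0(1+\alpha\beta)=1$, and not merely in $PSL(2,\mathbb{R})$ because the rule $\Theta\mapsto\Theta(cZ+e+(\alpha e-\beta c)\Theta)^{-1}$ pins down a sign of $cZ+e$.

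For irreducibility, observe that the image of $\rho$ in $PSL(2,\mathbb{R})$ is the cocompact Fuchsian group $\Gamma_S$. Such a group is non‑elementary: it contains two hyperbolic elements with disjoint fixed‑point pairs on $\partial\mathbb{H}^2=\mathbb{P}^1(\mathbb{R})$, so it has no fixed point and no invariant two‑element subset of $\mathbb{P}^1(\mathbb{R})$. If $\rho$ were reducible it would stabilize a line in $\mathbb{R}^2$, forcing a fixed point of $\Gamma_S$ on $\mathbb{P}^1(\mathbb{R})$, a contradiction; equivalently, reducibility would conjugate $\rho$ into a Borel subgroup, making $\pi_1(S)$ virtually solvable, which is false since a genus $\ge2$ surface group contains a nonabelian free subgroup. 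Hence $\rho$ is irreducible.

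The main obstacle is the core step: making rigorous that the superconformal structure on $S^{1|1}$ is, on the body, a genuine spin structure, and converting that spin structure into a homomorphic $SL(2,\mathbb{R})$‑lift of the Fuchsian holonomy (the cocycle/compatibility bookkeeping, and the role of $g_S\ge2$ entering through the even Euler characteristic). The uniformization input in the first step and the non‑elementarity argument in the last step are classical and essentially immediate.
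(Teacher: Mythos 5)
Your proof is correct, but it takes a genuinely different route from the paper's. The paper extracts the same theta characteristic $L$ from the superconformal structure (with $L^{\otimes 2}\simeq \Omega^1_S$, the dual convention to your $L^{\otimes 2}\cong T_S$) and then feeds it into nonabelian Hodge theory: it forms a rank-two bundle $E$ as an extension $0\to L^{-1}\to E\to L\to 0$, equips it with the canonical Higgs field $\phi\colon E\to L\cong L^{-1}\otimes \Omega^1_S\subset E\otimes\Omega^1_S$, notes that the only $\phi$-invariant line subbundle is $L^{-1}$, of negative degree once $g_S\geq 2$, so that $(E,\phi)$ is stable, and invokes the Simpson correspondence to produce an irreducible representation. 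In that approach irreducibility comes for free from the stability inequality, but the final assertion that the representation lands in $SL(2,\mathbb{R})$ rather than merely $GL(2,\mathbb{C})$ is dismissed as obvious when it actually rests on Hitchin's identification of this particular Higgs bundle with a lift of the Fuchsian uniformizing representation. Your argument inverts the emphasis: you start from the Fuchsian representation given by classical uniformization and use the spin structure to lift it from $PSL(2,\mathbb{R})$ to $SL(2,\mathbb{R})$, with the obstruction killed by the evenness of $\chi(S)$ and the $2^{2g_S}$ lifts matching the $2^{2g_S}$ theta characteristics, so reality of the representation is manifest from the outset, while irreducibility requires the separate (but classical) non-elementarity argument for cocompact Fuchsian groups. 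The two constructions yield the same family of representations; yours is the more elementary and self-contained, and it makes explicit where $g_S\geq 2$ enters, whereas the paper's buys a one-line irreducibility proof and a framework that generalizes to higher rank, at the cost of importing the Simpson correspondence and leaving the reality claim unjustified.
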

  \begin{proof} Manin et al.  have showed that the superconformal structure  on $S^{1|1}$ corresponds to a choice of the theta characteristic  on $S$, namely, a line bundle $L$ over $S$ satisfying $L^{\otimes2}\simeq \Omega^1_S$ \cite{sm}. Then one can construct a bundle $E$ by the following extension
  $$0\rightarrow L^{-1}\rightarrow E\rightarrow L\rightarrow0.$$
  The Higgs field $\phi\in H^0(S,\End(E)\otimes\Omega^1_S)$ is defined by the composition $\phi:E\rightarrow L\simeq L^{-1}\otimes L^2\subset E\otimes \Omega^1_S$. It is obvious that the line subbundle of $E$ preserved by the Higgs field $\phi$ is exactly $L^{-1}$, and $\deg(L^{-1})<0=\deg(E)$ when $g_S\geq 2$. Hence $(E,\phi)$ is a stable Higgs bundle over $S$, which yields  an  irreducible representation $\rho:\pi_1(S)\rightarrow GL(2,\mathbb{C})$ by Simpson correspondence \cite{si}. Obviously, the image of this representation lies in the subgroup $SL(2,\mathbb{R})$. \qed
  \end{proof}

Inspired by the definition of the classical Arakelov-Green function \cite{l,s,we}, we propose the supergeometric version as follows.
 \begin{definition}Let $S^{1|1}$ be a super-Riemann surface with local coordinates $\{X_A\}=(Z;\Theta)$. A triple $(P,g, G_P)$ consisting of a  fixed point $P\in S^{1|1}$, a supermetric $g=g_{A\bar B}dX_AdX_{\bar B}$, and a  superfunction $G_P:S^{1|1}\rightarrow (\Lambda_\mathbb{R}^\infty)_0$ is called a super-Green triple on $S^{1|1}$ if it satisfies the following conditions:
 \begin{itemize}
   \item $\epsilon_0(G_P(Q))\geq0$ for any $Q\in S^{1|1}$,
\item  writing $G_P(X)=G_P^{(0)}(Z)+(G_P^{(1)}(Z)\Theta+\bar G_P^{(1)}(Z) \bar \Theta)+G_P^{(2)}(Z)\Theta\bar \Theta$ in the neighborhood  centered at $P=(0;0)$, each nonzero component $G_P^{(i)}(Z), i\in\{0,1,2\}$   has a first order zero for $Z=0$,
   \item $-(-1)^{|X_{\bar B}|}dX_A(\frac{\partial^2}{\partial X_A\partial X_{\bar B}}\log G_P(X))dX_{\bar B}$ coincides with $g$ outside the singular locus of $\log G_P(X)$,
       \item $(\epsilon_0(P),\epsilon_0(g),\epsilon_0(G_P(X)))$ provides a classical Green triple  on the body of $S^{1|1}$.
 \end{itemize}
In particular,  the superfunction $\mathcal{G}_P=\log G_P$ is called a super-Green function on  $S^{1|1}$ associated with the supermetric $g$.
\end{definition}

We first consider the super-Riemann sphere  $\hat{\mathbb{C}}^{1|1}$.
\begin{proposition}$\hat{\mathbb{C}}^{1|1}$ can be endowed with a supermetric
\begin{align}\label{1}
  ds^2&=-\frac{1}{2}(-1)^{|X_{\bar B}|}dX_A(\partial_A\partial_{\bar B}\log\frac{Z\bar Z}{1+Z\bar Z+\Theta\bar \Theta})dX_{\bar B}\nonumber\\
 &= \frac{1}{(1+Z\bar Z+\Theta\bar \Theta)^2}((1+\Theta\bar \Theta)dZd\bar Z-\Theta\bar Z dZd\bar \Theta+Z\bar \Theta d\Theta d\bar Z+(1+Z\bar Z+2\Theta\bar \Theta)d\Theta d\bar \Theta),
\end{align}
and a super-volume form
\begin{align*}
  &d\mathrm{SVol}\\
  =&\sqrt{\left|\mathrm{Sdet}\left(
                                                          \begin{array}{cccc}
                                                            0 & \frac{1+\Theta\bar \Theta}{2(1+Z\bar Z+\Theta\bar \Theta)^2} & 0 &\frac{\Theta\bar Z}{2(1+Z\bar Z+\Theta\bar \Theta)^2} \\
                                                            \frac{1+\Theta\bar \Theta}{2(1+Z\bar Z+\Theta\bar \Theta)^2} &0 &\frac{-Z\bar \Theta}{2(1+Z\bar Z+\Theta\bar \Theta)^2} &0  \\
                                                            0 & \frac{-Z\bar\Theta }{2(1+Z\bar Z+\Theta\bar \Theta)^2} & 0& -\frac{1+Z\bar Z+2\Theta\bar \Theta}{2(1+Z\bar Z+\Theta\bar \Theta)^2} \\
                                                            \frac{\Theta\bar Z}{2(1+Z\bar Z+\Theta\bar \Theta)^2} & 0 & \frac{1+Z\bar Z+2\Theta\bar \Theta}{2(1+Z\bar Z+\Theta\bar \Theta)^2} & 0 \\
                                                          \end{array}
                                                        \right)\right|
                       }dZ d\bar Zd\Theta d\bar \Theta\\
                       =&\frac{dZ d\bar Zd\Theta d\bar \Theta}{1+Z\bar Z+\Theta\bar\Theta}.
\end{align*}
\end{proposition}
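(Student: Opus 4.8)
The plan is to read \eqref{1} as a super-K\"ahler identity and then obtain the super-volume form as a Berezinian of the associated metric matrix. Put $\Phi:=1+Z\bar Z+\Theta\bar\Theta$ and split the potential as $\log\frac{Z\bar Z}{\Phi}=\log Z+\log\bar Z-\log\Phi$. Since $\partial_A\partial_{\bar B}$ annihilates both $\log Z$ and $\log\bar Z$ for $A\in\{Z,\Theta\}$, $\bar B\in\{\bar Z,\bar\Theta\}$ (and $Z\neq0$), the Hessian of the potential equals that of $-\log\Phi$ off the divisor $Z=0$. First I would compute the four second derivatives $\partial_A\partial_{\bar B}\log\Phi$ from $\partial_Z\Phi=\bar Z$, $\partial_{\bar Z}\Phi=Z$, $\partial_\Theta\Phi=\bar\Theta$, $\partial_{\bar\Theta}\Phi=-\Theta$ together with $\partial_Z\partial_{\bar Z}\Phi=1$, $\partial_\Theta\partial_{\bar\Theta}\Phi=-1$ and vanishing mixed-parity second derivatives; for example $\partial_Z\partial_{\bar Z}\log\Phi=\frac1\Phi-\frac{Z\bar Z}{\Phi^2}=\frac{1+\Theta\bar\Theta}{\Phi^2}$ and $\partial_\Theta\partial_{\bar\Theta}\log\Phi=-\frac{\Phi+\Theta\bar\Theta}{\Phi^2}=-\frac{1+Z\bar Z+2\Theta\bar\Theta}{\Phi^2}$. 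Inserting these into $-\tfrac12(-1)^{|X_{\bar B}|}dX_A(\partial_A\partial_{\bar B}\log\frac{Z\bar Z}{\Phi})dX_{\bar B}$ --- where the $\tfrac12$ and the parity signs absorb the symmetrization over the (holomorphic, antiholomorphic) index pair --- reproduces exactly the explicit tensor on the right-hand side of \eqref{1}, and the very same Hessian, written in the ordering $(Z,\bar Z,\Theta,\bar\Theta)$, is the supermatrix appearing under the $\mathrm{Sdet}$.

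Next I would verify that these data descend to $\hat{\mathbb{C}}^{1|1}$, that is, are compatible with the transition $(Z,\Theta)\mapsto(-1/Z,\Theta/Z)$. One computes $Z'\bar Z'=(Z\bar Z)^{-1}$ and $\Theta'\bar\Theta'=(Z\bar Z)^{-1}\Theta\bar\Theta$, hence $\frac{Z'\bar Z'}{\Phi'}=\frac1\Phi=\frac1{Z\bar Z}\cdot\frac{Z\bar Z}{\Phi}$: the potential changes only by $-\log Z-\log\bar Z$, a super-K\"ahler transformation, so $ds^2$ is globally defined. Likewise the Berezinian of this coordinate change is $(Z'\bar Z')^{-1}$, while $\Phi^{-1}$ picks up the factor $Z'\bar Z'$, so $\frac{dZ\,d\bar Z\,d\Theta\,d\bar\Theta}{\Phi}$ is invariant and is a genuine super-volume form on $\hat{\mathbb{C}}^{1|1}$.

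For the identity $d\mathrm{SVol}=\frac{dZ\,d\bar Z\,d\Theta\,d\bar\Theta}{1+Z\bar Z+\Theta\bar\Theta}$ I would use the block formula $\mathrm{Sdet}\begin{pmatrix}A&B\\C&D\end{pmatrix}=\det(A-BD^{-1}C)\det(D)^{-1}$, with $A$ the (symmetric) even-even block and $D=\frac{1+Z\bar Z+2\Theta\bar\Theta}{2\Phi^{2}}\begin{pmatrix}0&-1\\1&0\end{pmatrix}$ the (antisymmetric, invertible) odd-odd block. Writing $u:=1+Z\bar Z$ and using $(\Theta\bar\Theta)^2=0$, one finds $(1+Z\bar Z+2\Theta\bar\Theta)^2/\Phi^4=1/u^2$, so $\det D=\frac1{4u^2}$ has no nilpotent part; a short computation gives $BD^{-1}C=\frac{Z\bar Z\,\Theta\bar\Theta}{2u^{3}}\begin{pmatrix}0&1\\1&0\end{pmatrix}$, whence --- using $u-Z\bar Z=1$ --- the nilpotent corrections cancel and $A-BD^{-1}C=\frac1{2u\Phi}\begin{pmatrix}0&1\\1&0\end{pmatrix}$, so $\det(A-BD^{-1}C)=-\frac1{4u^2\Phi^2}$. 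Therefore $\mathrm{Sdet}=-\Phi^{-2}$, and since $\Phi$ has positive body $\sqrt{|\mathrm{Sdet}|}=\Phi^{-1}$, which is the asserted super-volume density.

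The main obstacle is bookkeeping: the sign rules for left odd derivatives, the $(-1)^{|X_{\bar B}|}$ in the potential formula, and the supertranspose implicit in the Berezinian must all be tracked consistently. The one structurally nontrivial point is the disappearance of the nilpotent corrections in $\det D$ and in $A-BD^{-1}C$ --- the super-analogue of the rigidity of the Fubini--Study data on $\mathbb{CP}^{1|1}$; once that cancellation is checked, the clean value $\mathrm{Sdet}=-\Phi^{-2}$, and hence the volume form, is forced.
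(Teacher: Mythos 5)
Your proposal is correct, and it takes a different (and in fact more complete) route than the paper. The paper's own proof establishes only the global well-definedness of the metric, and does so by brute force: it transforms each of the four summands of the explicit tensor under $(Z,\Theta)\mapsto(-1/Z,\Theta/Z)$ and observes that the cross terms cancel in the sum. You instead argue at the level of the super-K\"ahler potential: since $\Phi'=\Phi/(Z\bar Z)$, the potential $\log\frac{Z\bar Z}{\Phi}$ shifts by $-\log Z-\log\bar Z$, which is annihilated by the mixed Hessian $\partial_A\partial_{\bar B}$, so invariance is immediate; the same factor of $Z\bar Z$ cancels against the Berezinian of the coordinate change to give invariance of the volume density. (Minor bookkeeping: you assign the factor $(Z'\bar Z')^{-1}$ to the Berezinian and $Z'\bar Z'$ to $\Phi^{-1}$; with $Z'\bar Z'=(Z\bar Z)^{-1}$ the two labels should be swapped, but since they are reciprocals the product is invariant either way.) Your proposal also supplies two verifications the paper omits entirely: the derivation of the explicit tensor in \eqref{1} from the potential (the computations $\partial_Z\partial_{\bar Z}\log\Phi=(1+\Theta\bar\Theta)/\Phi^2$, $\partial_\Theta\partial_{\bar\Theta}\log\Phi=-(1+Z\bar Z+2\Theta\bar\Theta)/\Phi^2$, etc., are correct with left odd derivatives), and the evaluation $\mathrm{Sdet}=-\Phi^{-2}$ via the block formula --- I checked that $\det D=\tfrac{1}{4u^2}$, $BD^{-1}C=\tfrac{Z\bar Z\,\Theta\bar\Theta}{2u^3}\bigl(\begin{smallmatrix}0&1\\1&0\end{smallmatrix}\bigr)$ and $A-BD^{-1}C=\tfrac{1}{2u\Phi}\bigl(\begin{smallmatrix}0&1\\1&0\end{smallmatrix}\bigr)$ all hold, so the cancellation of nilpotent corrections you highlight is genuine. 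The trade-off: the paper's check is elementary and convention-free, while yours is shorter, explains \emph{why} the cancellations occur (a K\"ahler transformation), and actually proves all three assertions of the proposition rather than one.
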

\begin{proof}We calculate the transformation of  each summand with respect to the transition functions of $\hat{\mathbb{C}}^{1|1}$
\begin{align*}
  \frac{1+\Theta\bar \Theta}{(1+Z\bar Z+\Theta\bar \Theta)^2}dZd\bar Z&\mapsto\frac{1}{(1+Z\bar Z+\Theta\bar \Theta)^2}(1+\frac{\Theta\bar \Theta}{Z\bar Z})dZd\bar Z,\\
  \frac{-\Theta\bar Z}{(1+Z\bar Z+\Theta\bar \Theta)^2}dZd\bar \Theta&\mapsto\frac{1}{(1+Z\bar Z+\Theta\bar \Theta)^2}(\frac{\Theta}{Z}dZ d\bar \Theta-\frac{\Theta\bar \Theta}{Z\bar Z}dZ d\bar Z),\\
  \frac{Z\bar \Theta}{(1+Z\bar Z+\Theta\bar \Theta)^2}d\Theta d\bar Z&\mapsto\frac{1}{(1+Z\bar Z+\Theta\bar \Theta)^2}(-\frac{\bar \Theta}{\bar Z}d\bar Z d\Theta-\frac{\Theta\bar \Theta}{Z\bar Z}dZ d\bar Z),\\
  \frac{1+Z\bar Z+2\Theta\bar \Theta}{(1+Z\bar Z+\Theta\bar \Theta)^2}d\Theta d\bar \Theta&\mapsto\frac{1}{(1+Z\bar Z+\Theta\bar \Theta)^2} ((1+Z\bar Z+2\Theta\bar \Theta)d\Theta d\bar \Theta-(1+\frac{1}{Z\bar Z})\Theta\bar Z dZd\bar \Theta\\
  &\ \ \ \ \ \ \ +(1+\frac{1}{Z\bar Z}) Z\bar \Theta d \Theta d\bar Z+(1+\frac{1}{Z\bar Z})\Theta\bar \Theta dZd\bar Z).
\end{align*}
Combining these results,  the given metric is globally well-defined on the supersphere.\qed
\end{proof}
\begin{corollary}\label{d}Let $P=(0;0)\in \hat{\mathbb{C}}^{1|1}$, $g$ is the supermetric given by \eqref{1}, and  $G_P(X)=\sqrt{\frac{Z\bar Z}{1+Z\bar Z+\Theta\bar \Theta}}$, then $(P,g,G_P)$ forms  a super-Green triple on $\hat{\mathbb{C}}^{1|1}$.
\end{corollary}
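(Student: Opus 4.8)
The plan is to verify the four conditions in the definition of a super-Green triple for the data $P=(0;0)$, $g$ the supermetric \eqref{1}, and $G_P(X)=\sqrt{Z\bar Z/(1+Z\bar Z+\Theta\bar\Theta)}$. The first condition is immediate, since $\epsilon_0(G_P(Q))=|z|/\sqrt{1+|z|^2}\ge 0$. For the second condition I would expand $G_P$ in the odd coordinates: because $(\Theta\bar\Theta)^2=0$ one has $(1+Z\bar Z+\Theta\bar\Theta)^{-1/2}=(1+Z\bar Z)^{-1/2}\bigl(1-\frac{1}{2}(1+Z\bar Z)^{-1}\Theta\bar\Theta\bigr)$, hence
\begin{align*}
G_P(X)=\sqrt{\frac{Z\bar Z}{1+Z\bar Z}}\;-\;\frac{1}{2}\,\frac{\sqrt{Z\bar Z}}{(1+Z\bar Z)^{3/2}}\,\Theta\bar\Theta .
\end{align*}
Thus $G_P^{(0)}(Z)=\sqrt{Z\bar Z/(1+Z\bar Z)}$, $G_P^{(1)}(Z)=0$, and $G_P^{(2)}(Z)=-\frac{1}{2}\sqrt{Z\bar Z}/(1+Z\bar Z)^{3/2}$; each of the two nonzero components behaves like a constant multiple of $|Z|$ near $Z=0$, hence has a first order zero there, and the singular locus of $\log G_P$ is precisely $\{Z=0\}$.

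The third condition is the heart of the matter, but it follows essentially by construction from the preceding proposition. Writing $\log G_P=\frac{1}{2}\log Z+\frac{1}{2}\log\bar Z-\frac{1}{2}\log(1+Z\bar Z+\Theta\bar\Theta)$ and observing that the first two summands are annihilated by every mixed operator $\partial_A\partial_{\bar B}$ away from $Z=0$ (one involves only the holomorphic coordinates, the other only the antiholomorphic ones), one obtains, outside $\{Z=0\}$,
\begin{align*}
-(-1)^{|X_{\bar B}|}\,dX_A\bigl(\partial_A\partial_{\bar B}\log G_P\bigr)\,dX_{\bar B}
=-\frac{1}{2}(-1)^{|X_{\bar B}|}\,dX_A\Bigl(\partial_A\partial_{\bar B}\log\frac{Z\bar Z}{1+Z\bar Z+\Theta\bar\Theta}\Bigr)\,dX_{\bar B},
\end{align*}
which is exactly the right-hand side of \eqref{1}, i.e. the supermetric $g$. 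So this condition is automatic once $g$ has been defined as in \eqref{1}.

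For the fourth condition I would pass to bodies: $\epsilon_0(P)=0\in\hat{\mathbb{C}}=\mathbb{P}^1$, the metric $\epsilon_0(g)$ is the Fubini--Study metric $dz\,d\bar z/(1+z\bar z)^2$, and $\epsilon_0(\mathcal{G}_P)=\frac{1}{2}\log\bigl(|z|^2/(1+|z|^2)\bigr)$. The latter is precisely the Arakelov--Green function of $\mathbb{P}^1$ with the Fubini--Study metric based at $0$, as computed in Manin's paper \cite{m}; hence $(\epsilon_0(P),\epsilon_0(g),\epsilon_0(G_P))$ is a classical Green triple, which completes the verification. The only genuinely delicate points are (i) the correct reading of ``first order zero'' --- the components here are real-analytic but not holomorphic and vanish like $|Z|$, which is the super-analogue of the logarithmic singularity of the classical Green function --- and (ii) matching the normalization constant in $\epsilon_0(\mathcal{G}_P)$ with the convention for the Green function used in \cite{m}; apart from these, the proof is a direct substitution.
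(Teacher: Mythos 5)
Your verification is correct and follows the only natural route, which is also the one the paper intends: the corollary is left unproved there precisely because condition (3) is built into the definition of the supermetric \eqref{1} via $\log G_P=\tfrac12\log\frac{Z\bar Z}{1+Z\bar Z+\Theta\bar\Theta}$, and your remaining checks (positivity of the body of $G_P$, the nilpotent expansion giving $G_P^{(1)}=0$ and the first-order vanishing of $G_P^{(0)}$, $G_P^{(2)}$ at $Z=0$, and the reduction to the classical Fubini--Study Green triple on $\mathbb{P}^1$) fill in exactly what the paper leaves implicit.
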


 To obtain the  supergeometric analog of Manin's result connecting Green function and geodesic, we need to study the supergeodesics in the  hyperbolic superspaces.  The supergeodesic in $\mathbb{C}\mathcal{H}^{1|1}$ with respect to the supermetric \eqref{ss} is determined by the following equations \cite{u,uy}
 \begin{align*}
   \frac{d^2Z}{d^2u}+\frac{i}{Y}(\frac{dZ}{du})^2+\frac{\bar \Theta}{Y}\frac{dZ}{du}\frac{d\Theta}{du}&=0,\\
   \frac{d^2\Theta}{d^2u}+\frac{i}{Y}\frac{dZ}{du}\frac{d\Theta}{du}&=0,
 \end{align*}
and the complex conjugated ones.
\begin{proposition}\label{7}Let $P_1=(Z_1,\Theta_1), P_2=(Z_2,\Theta_2)$ be two points in $\mathbb{C}\mathcal{H}^{1|1}$, then one can join   $P_1$ and $P_2$ by supergeodesics piecewisely.
\end{proposition}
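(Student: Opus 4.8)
The plan is to build the connecting path out of two elementary families of supergeodesics: \emph{fibre} ones, along which $Z$ is frozen and $\Theta$ moves affinely in the parameter, and \emph{even-type} ones, along which $\Theta$ is frozen and $Z$ traces a Grassmann-analytic version of an ordinary hyperbolic geodesic. First I would verify that fibre curves really are supergeodesics: for fixed $Z_0$ with $\epsilon_0(\mathrm{Im}(Z_0))>0$ and odd constants $\eta_a,\eta_b$, the curve $u\mapsto (Z_0,\ \eta_a+u(\eta_b-\eta_a))$ on $[0,1]$ has $dZ/du=0$ and $d^2\Theta/du^2=0$ (and likewise for the conjugate functions), so every term of the displayed supergeodesic equations and of their complex conjugates vanishes identically, while $Y=\mathrm{Im}(Z_0)+\frac12\Theta\bar\Theta$ stays invertible since $\epsilon_0(\mathrm{Im}(Z_0))>0$; this curve is therefore a supergeodesic joining $(Z_0,\eta_a)$ to $(Z_0,\eta_b)$.

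Next I would treat even-type supergeodesics. Fixing an odd constant $\eta$ and imposing $\Theta\equiv\eta$, one has $d\Theta/du=0$, so the equation $\frac{d^2\Theta}{du^2}+\frac{i}{Y}\frac{dZ}{du}\frac{d\Theta}{du}=0$ holds automatically and the first equation collapses to $\frac{d^2Z}{du^2}+\frac{i}{Y}\big(\frac{dZ}{du}\big)^2=0$ with $Y=\mathrm{Im}(Z)+\frac12\eta\bar\eta$. Putting $c:=\frac12\eta\bar\eta$, a real even element with $c^2=0$, and substituting $W:=Z+ic$, so that $\mathrm{Im}(W)=\mathrm{Im}(Z)+c=Y$ and $W'=Z'$, $W''=Z''$, this becomes precisely the classical hyperbolic geodesic equation $W''+\frac{i}{\mathrm{Im}(W)}(W')^2=0$. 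Its solutions are the $\Lambda$-point analogues of the geodesics of $\mathbb{H}^2$ (exponentially parametrised vertical segments and their $SL(2,\mathbb{R})$-translates, i.e.\ semicircles); since the bodies of $Z_1,Z_2$ lie in the genuine upper half plane, since $SL(2,\mathbb{R})$ acts transitively there, and since $\mathbb{H}^2$ is geodesically connected, I would obtain a solution $W(u)$ with $W(0)=Z_1+ic$ and $W(1)=Z_2+ic$, whence $Z(u):=W(u)-ic$ together with $\Theta\equiv\eta$ is a supergeodesic joining $(Z_1,\eta)$ to $(Z_2,\eta)$.

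It then remains to concatenate: starting from $P_1=(Z_1,\Theta_1)$, run the fibre supergeodesic to the intermediate point $(Z_1,\Theta_2)$, and then run the even-type supergeodesic with $\eta=\Theta_2$ to $(Z_2,\Theta_2)=P_2$. The resulting two-piece curve is the desired piecewise supergeodesic, and in fact two pieces always suffice. One could equivalently normalise $P_1$ to $(i;0)$ first, using the transitivity of the isometry group $OSp(1|2,\mathbb{R})$ on $\mathbb{C}\mathcal{H}^{1|1}$, but this does not shorten the argument.

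The hard part will be making the even-type step fully rigorous over the Grassmann algebra, since the coordinates $Z_1,Z_2$ may themselves carry nilpotent even parts and one must solve the deformed geodesic equation with such data. I would do this either by expanding everything order by order in the nilpotents and solving the resulting tower of linear ODEs with the classical geodesic as leading term, or by using the transitive action of the $\Lambda$-points of $SL(2,\mathbb{R})$ together with the explicit geodesics emanating from $i$; in either approach the standing hypothesis $\epsilon_0(\mathrm{Im}(Z))>0$ is exactly what keeps all denominators invertible. Finally, I expect that, as is typical in super-Riemannian geometry, a single supergeodesic need not join $P_1$ and $P_2$ once their odd coordinates differ, which is precisely why the statement is phrased only up to piecewise concatenation.
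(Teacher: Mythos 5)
Your argument is correct but follows a genuinely different decomposition from the paper's. The paper also starts from the Type-I (fibre) solutions with $Z$ frozen, but for the even step it uses only the solution family it records from Uehara--Yasui, namely $Z(u)=c_1[\tanh\omega(u+u_0)+i\,\mathrm{sech}\,\omega(u+u_0)]+c_2$ with $\Theta(u)=\xi Z(u)$ for a real odd constant $\xi$; since the odd coordinate along such a curve is constrained to the special form $\xi Z$, the paper needs \emph{three} pieces, $P_1\to(Z_1,\xi Z_1)\to(Z_2,\xi Z_2)\to P_2$, with fibre geodesics at both ends. You instead introduce a third family --- constant $\Theta\equiv\eta$ with $Z$ solving $Z''+\tfrac{i}{Y}(Z')^2=0$, $Y=\mathrm{Im}(Z)+\tfrac12\eta\bar\eta$ --- which is not among the solutions the paper lists (for $Z'\neq0$ it is not of the form $\Theta=\xi Z$ unless $\eta=0$), and this lets you get away with two pieces and a transparent reduction to the classical geodesic equation via the shift $W=Z+ic$ (legitimate since $\Theta\bar\Theta=-2\theta_1\theta_2$ is real even nilpotent). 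The one thing you are implicitly relying on is that the displayed pair of equations together with their conjugates is the \emph{complete} geodesic system, i.e.\ that there is no $\Gamma^{\Theta}_{ZZ}$-type term that would obstruct constant $\Theta$; this is indeed the case for the Uehara--Yasui system the paper quotes, so your verification by direct substitution is sound. Both your proof and the paper's appeal at the same point to the classical existence argument for the even geodesic over the Grassmann algebra (you are in fact more explicit, offering the order-by-order expansion in nilpotents), so neither is more rigorous there. The only thing the paper's choice buys that yours does not is that its $\xi$-parametrized family $\{P_1,P_2\}_\xi$ is reused in the subsequent constructions leading to Propositions \ref{8}--\ref{10}, so the three-piece decomposition is setting up notation rather than being forced.
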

\begin{proof}
We have  the following  solutions for the  supergeodesic  equations:
\begin{itemize}
  \item Type-I: when  $\frac{dZ}{du}\equiv0$, there is  a solution
  $$Z=c, \bar Z=\bar c, \Theta(u)=\gamma u+\zeta, \bar \Theta(u)=\bar \gamma u+\bar \zeta,$$ for constants $c\in (\Lambda^\infty_\mathbb{C})_0, \gamma,\zeta\in(\Lambda^\infty_\mathbb{C})_1$,
  \item Type-II: when $\frac{d Z}{du}\neq0$, there is  a solution
  \begin{align*}
     Z(u)&=c_1[\tanh\omega(u+u_0)+i\sech \, \omega(u+u_0)]+c_2, \mathrm{\ or \ } ie^{\omega(u+u_0)}+c_3,\\
      \Theta(u)&=\xi Z(u),\\
      \bar Z(u)&=\overline{Z(u)}, \bar \Theta(u)=\overline{\Theta(u)},
  \end{align*}
for constants $c_1,c_2,c_3,\omega, u_0\in (\Lambda^\infty_{\mathbb{R}})_0$,  and $\xi\in (\Lambda^\infty_{\mathbb{R}})_1$.
\end{itemize}
 Firstly, we join $P_1$ and $P_1^\prime=(Z_1,\xi Z_1)$ for some $\xi\in  (\Lambda^\infty_{\mathbb{R}})_1$ by  virtue of  a supergeodesic  $\mathcal{G}_{I}$ described by the type-I solution. The same argument as that for the classical geodesics implies that  there exists a  supergeodesic  $\mathcal{G}_{II}$ described by the type-II solution such that it is parameterized by $u$ with $Z(u_1)=Z_1, Z(u_2)=Z_2$ and $\epsilon_0(u)\in[\epsilon_0(u_1),\epsilon_0(u_2)]$. Namely, $\mathcal{G}_{II}$ joins the points $P_1^\prime $ and $P_{2}^\prime =(Z_2, \xi Z_2)$. Finally,  we  join the points $P_2^\prime $ and $P_2$ via a supergeodesic $\mathcal{G}^\prime_{I}$ described by the type-I solution.
\end{proof}

 The above  proposition suggests the following definition.
\begin{definition} The  bosonic  superdistance $d(P_1,P_2)$ between $P_1$ and $P_2$
is defined by the integral along the supergeodesic  $\mathcal{G}_{II}$
\begin{align*}
 d(P_1,P_2)=\int_{u_1}^{u_2}\sqrt{(\frac{ds}{du})^2}du=\omega(u_2-u_1),
 \end{align*}
 or defined by $$\cosh d(P_1,P_2)=1+\frac{|Z_1-Z_2|^2}{2\mathrm{Im}(Z_1)\mathrm{Im}(Z_2)}.$$
\end{definition}

Now we view $\hat{\mathbb{C}}^{1|1}$ as the boundary of $\mathcal{H}^{3|2}\bigcup\{\infty\}$. For two distinct point $P_1,P_2\in\hat{\mathbb{C}}^{1|1}$, one has an  upper half superplane $\mathbb{C}\mathcal{H}^{1|1}_{P_1P_2}=\{(Z,\Theta):\mathrm{Im}(Z)=t>0\}$ embedded in $\mathcal{H}^{3|2}$ such that $P_1,P_2$ lie on its boundary.  According to Manin's approach,  one   joins $P_1$ and $P_2$  piecewisely  with supergeodesics in $\mathbb{C}\mathcal{H}^{1|1}_{P_1P_2}$ as  described in the proof of Proposition \ref{7}. In particular, within these supergeodesics,  the supergeodesic $\mathcal{G}_{II}$ with a chosen constant $\xi$ for odd coordinates is denoted  by $\{P_1,P_2\}_\xi$. Let $Q$ be another point in  $\mathcal{H}^{3|2}$, then one intoduces
$$d_Q(P_1,P_2)=d(Q,P^Q_{12})$$
where $P^Q_{12}\in\{P_1,P_2\}_\xi $   is uniquely determined by  the following  condition  $$\epsilon_0(d(Q,P^Q_{12}))=\inf_{P\in \{P_1,P_2\}_\xi}\epsilon_0(d(Q,P)),$$
and the bosonic superdistances appearing here are calculated in the upper half superplane $\mathbb{C}\mathcal{H}^{1|1}_{\underline{Q}\underline{P^Q_{12}}}$ with $\underline{Q}=Q|_{t=0}, \underline{P^Q_{12}}=P^Q_{12}|_{t=0}$.
For two given pints  $P_1=(0,0,0;0,0), P_2=(x,y,0;\theta_1,\theta_2)$ lying on  the boundary of $\mathcal{H}^{3|2}$ and $Q=(0,0,1;0,0)\in \mathcal{H}^{3|2}$, we have
$$P^Q_{12}=(\frac{x}{2+x^2+y^2},\frac{y}{2+x^2+y^2},\frac{\sqrt{(x^2+y^2)(1+x^2+y^2)}}{2+x^2+y^2};\frac{x\xi}{2+x^2+y^2},\frac{y\xi}{2+x^2+y^2}),$$
hence
\begin{align*}
 \cosh d_Q(P_1,P_2)&=1+\frac{(\frac{|Z|}{2+|Z|^2})^2+(1-\frac{|Z|\sqrt{1+|Z|^2}}{2+|Z|^2})^2}{2\frac{|Z|\sqrt{1+|Z|^2}}{2+|Z|^2}}\\
 &=\sqrt{1+\frac{1}{|Z|^2}},
\end{align*}
where $Z=x+iy$. Consequently, we arrive at the following proposition.
\begin{proposition}\label{8}The super-Green function on $\hat{\mathbb{C}}^{1|1}$ defined as in   Corollary \ref{d} can be reexpressed as
$$\mathcal{G}_{P_1}(Z,\Theta)=\log(\frac{1}{ \cosh d_Q(P_1,P_2)}-\frac{ 1}{ 2\cosh d_Q(P_1,P_2)}(1-\frac{1}{\cosh^2 d_Q(P_1,P_2)})\Theta\bar \Theta)$$
with $P_2=(Z;\Theta)$.
\end{proposition}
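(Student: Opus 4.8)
The plan is to prove the identity by a direct computation in the Grassmann algebra, using repeatedly that the even element $\Theta\bar\Theta$ is nilpotent of order two. First I would unwind the definition: by Corollary \ref{d}, with $P_1=(0;0)$ the super-Green triple on $\hat{\mathbb{C}}^{1|1}$ has $G_{P_1}(X)=\sqrt{Z\bar Z/(1+Z\bar Z+\Theta\bar\Theta)}$, so $\mathcal{G}_{P_1}(Z,\Theta)=\log G_{P_1}(X)=\frac12\log(Z\bar Z)-\frac12\log(1+Z\bar Z+\Theta\bar\Theta)$, understood on $\hat{\mathbb{C}}^{1|1}\setminus\{P_1\}$ where $\epsilon_0(Z\bar Z)>0$. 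Since $\Theta$ is odd we have $(\Theta\bar\Theta)^2=0$, so the logarithm of $1+Z\bar Z+\Theta\bar\Theta$ truncates after one term: $\log(1+Z\bar Z+\Theta\bar\Theta)=\log(1+Z\bar Z)+\Theta\bar\Theta/(1+Z\bar Z)$, where we use that $1+Z\bar Z$ has body $\geq 1$ and is therefore invertible in $\Lambda^\infty_{\mathbb{R}}$. This gives the closed form $\mathcal{G}_{P_1}(Z,\Theta)=\frac12\log\frac{Z\bar Z}{1+Z\bar Z}-\frac{1}{2(1+Z\bar Z)}\Theta\bar\Theta$.

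Next I would bring in the computation established immediately before the statement, namely $\cosh d_Q(P_1,P_2)=\sqrt{1+1/|Z|^2}$ for $Q=(0,0,1;0,0)$ — a purely bosonic quantity, since the bosonic superdistance depends only on the even coordinates. From it one reads off $\cosh^{-1}d_Q(P_1,P_2)=\sqrt{Z\bar Z/(1+Z\bar Z)}$, $\cosh^{-2}d_Q(P_1,P_2)=Z\bar Z/(1+Z\bar Z)$, and hence $1-\cosh^{-2}d_Q(P_1,P_2)=1/(1+Z\bar Z)$.

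Finally I would expand the right-hand side of the claimed formula and match term by term. Writing $A=\cosh^{-1}d_Q(P_1,P_2)$, the argument of the outer logarithm factors as $A\bigl(1-\tfrac12(1-\cosh^{-2}d_Q(P_1,P_2))\Theta\bar\Theta\bigr)$; taking $\log$ and using $(\Theta\bar\Theta)^2=0$ once more produces $\log A-\tfrac12(1-\cosh^{-2}d_Q(P_1,P_2))\Theta\bar\Theta$. Substituting $\log A=\tfrac12\log\frac{Z\bar Z}{1+Z\bar Z}$ and $\tfrac12(1-\cosh^{-2}d_Q(P_1,P_2))=\frac{1}{2(1+Z\bar Z)}$ reproduces exactly the closed form of the first paragraph, which finishes the proof. (If desired, one can then re-package the two terms into $\log\frac{1}{\cosh\mathbf{d}(\tilde Q,\tilde P_{12})}$ for a suitably defined super-distance $\mathbf{d}$, as in the statement of the main theorem.)

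I do not expect a genuine obstacle here: the argument is elementary once the super-Green function of Corollary \ref{d} and the distance computation preceding the statement are in hand. The only points demanding care are conventional — that $\Theta\bar\Theta$ is an even nilpotent, so that each logarithm encountered is a two-term Taylor expansion rather than an infinite series, and that the equality is to be read on the complement of $P_1$, where $Z\bar Z$ and $1+Z\bar Z$ are invertible elements of $\Lambda^\infty_{\mathbb{R}}$.
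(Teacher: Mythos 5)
Your proposal is correct and follows the same route as the paper: the paper computes $P^Q_{12}$ and $\cosh d_Q(P_1,P_2)=\sqrt{1+1/|Z|^2}$ in the paragraph preceding the proposition and then simply writes ``consequently,'' and your nilpotent expansion $\log(1+Z\bar Z+\Theta\bar\Theta)=\log(1+Z\bar Z)+\Theta\bar\Theta/(1+Z\bar Z)$ together with the identification $1-\cosh^{-2}d_Q(P_1,P_2)=1/(1+Z\bar Z)$ is exactly the elementary algebra the paper leaves implicit. No gaps.
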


The above approach is not sensitive to the odd coordinates. However,  we can do some modifications for taking the odd part into account.
To achieve that,  one introduces the
superdistance function $\mathbf{d}:\mathbb{C}\mathcal{H}^{1|1}\times \mathbb{C}\mathcal{H}^{1|1}\rightarrow(\Lambda^\infty_\mathbb{R})_0$,
 which was firstly defined by physicists  Uehara and Yasui \cite{uy}, as follows  \begin{align*}
  \cosh\mathbf{d}(P_1,P_2)=1+\frac{1}{2}R(P_1,P_2)-2r(P_1,P_2)
\end{align*}
where $P_1=(Z_1;\Theta_1), P_2=(Z_2;\Theta_2)\in \mathbb{C}\mathcal{H}^{1|1}$, and
\begin{align*}
  R(P_1,P_2)&=\frac{|Z_1-Z_2-\Theta_1\Theta_2|^2}{Y_1Y_2},\\
  r(P_1,P_2)&=\frac{2\Theta_1\bar \Theta_1+i(\Theta_2-i\bar \Theta_2)(\Theta_1+i\bar \Theta_1)}{4Y_1}+\frac{2\Theta_2\bar \Theta_2+i(\Theta_1-i\bar \Theta_1)(\Theta_2+i\bar \Theta_2)}{4Y_2}\\
  &\ \ \ \ +\frac{(\Theta_2+i\bar \Theta_2)(\Theta_1+i\bar \Theta_1)\mathrm{Re}(Z_1-Z_2-\Theta_1\Theta_2)}{4Y_1Y_2}
\end{align*}
for  $Y_i=\mathrm{Im}(Z_i)+\frac{\Theta_i\bar \Theta_i}{2},i=1,2$. It is easy to see that this superdistance function enjoys the properties $$\mathbf{d}(P_1,P_2)=\mathbf{d}(P_2,P_1)= \mathbf{d}(\Gamma\cdot P_1, \Gamma\cdot P_2)$$ for any $\Gamma\in OSp(1|2,\mathbb{R})$ \cite{sm}.
In our setting,  the inputs are two given  points  $P_1=(0;0),P_2=(Z;\Theta)$  lying on  the boundary of $\mathcal{H}^{3|2}$. Then  in the upper half superplane $\mathbb{C}\mathcal{H}^{1|1}_{P_1P_2}$,  we
join $P_1$ and $P_2$ by the supergeodesic $\tilde {\mathcal{G}}_{II}$ governed   by the following solution
\begin{align*}
 Z(u)&=(\frac{|Z|}{2}-\frac{i\Theta\bar\Theta e^{ \omega (u+u_0)}}{4\cosh \omega (u+u_0)})[\tanh  \omega( u+u_0)+i\mathrm{sech}\,   \omega (u+u_0)]+\frac{|Z|}{2},\\
      \Theta(u)&=\frac{\Theta}{2}[1+\tanh  \omega (u+u_0)+i\mathrm{sech}\,   \omega (u+u_0)],\\
      \bar Z(u)&=\overline{Z(u)}, \bar \Theta(u)=\overline{\Theta(u)},
\end{align*}
with  $\epsilon_0(u)\in[-\infty,+\infty]$.
The point $\tilde P_{12}\in\tilde {\mathcal{G}}_{II} $ determined by $\epsilon_0(\tilde P_{12})=\epsilon_0(P^Q_{12})$ is given by
\begin{align*}
  \tilde P_{12}=(&\frac{|Z|}{2+|Z|^2}+\frac{\sqrt{1+|Z|^2}}{(2+|Z|^2)^2}\Theta\bar \Theta+i(\frac{|Z|\sqrt{1+|Z|^2}}{2+|Z|^2}+\frac{|Z|^2}{2(2+|Z|^2)^2}\Theta\bar \Theta);\\
  &\frac{\Theta}{2}(1-\frac{|Z|^2}{2+|Z|^2}+i\frac{2\sqrt{1+|Z|^2}}{2+|Z|^2})),
\end{align*}
as a point in $\mathbb{C}\mathcal{H}^{1|1}_{P_1P_2}$, hence  the superdistance between $\tilde Q=(0,0,1+\frac{2|Z|+\sqrt{1+|Z|^2}}{2|Z|(2+|Z|^2-|Z|\sqrt{1+|Z|^2})}\Theta\bar \Theta;0,0)$ and $\tilde P$ is given by
\begin{align*}
& \cosh \mathbf{d}(\tilde Q,\tilde P_{12})\\
=&1+\frac{(\frac{|Z|}{2+|Z|^2}+\frac{\sqrt{1+|Z|^2}}{(2+|Z|^2)^2}\Theta\bar \Theta)^2+(1-\frac{|Z|\sqrt{1+|Z|^2}}{2+|Z|^2}+(\frac{2|Z|+\sqrt{1+|Z|^2}}{2|Z|(2+|Z|^2-|Z|\sqrt{1+|Z|^2})}-\frac{|Z|^2}{2(2+|Z|^2)^2})\Theta\bar \Theta)^2}{\frac{2|Z|\sqrt{1+|Z|^2}}{2+|Z|^2}+\frac{2(1+|Z|^2)}{(2+|Z|^2)^2}\Theta\bar \Theta}\\
 &-\frac{\frac{\Theta\bar \Theta}{2+|Z|^2}}{\frac{|Z|\sqrt{1+|Z|^2}}{2+|Z|^2}+\frac{1+|Z|^2}{(2+|Z|^2)^2}\Theta\bar \Theta}\\
 =&\sqrt{1+\frac{1}{|Z|^2}}+\frac{\Theta\bar \Theta}{2|Z|\sqrt{1+|Z|^2}}=\sqrt{1+\frac{1}{|Z|^2}+\frac{\Theta\bar \Theta}{|Z|^2}}.
\end{align*}
Aa a result, we obtain the following proposition.
\begin{proposition}\label{9}The super-Green function defined as  in Corollary \ref{d} can be rewritten  as
$$\mathcal{G}_{P_1}(Z,\Theta)=\log\frac{1}{ \cosh \mathbf{d}(\tilde Q,\tilde P_{12})}$$
for the  points $\tilde Q,\tilde P_{12}$ given as above.
\end{proposition}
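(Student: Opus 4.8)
The plan is to reduce the statement to the single algebraic identity
$$\cosh\mathbf{d}(\tilde Q,\tilde P_{12})=\sqrt{1+\tfrac{1}{|Z|^2}+\tfrac{\Theta\bar\Theta}{|Z|^2}},$$
which is exactly the output of the long computation displayed just before the proposition. Granting this identity the proposition is immediate: by Corollary \ref{d} and the definition of a super-Green function one has $\mathcal{G}_{P_1}(Z,\Theta)=\log G_{P_1}(X)=\tfrac12\log\frac{|Z|^2}{1+|Z|^2+\Theta\bar\Theta}$, hence $G_{P_1}(Z,\Theta)^{-1}=\sqrt{\frac{1+|Z|^2+\Theta\bar\Theta}{|Z|^2}}=\sqrt{1+\tfrac1{|Z|^2}+\tfrac{\Theta\bar\Theta}{|Z|^2}}=\cosh\mathbf{d}(\tilde Q,\tilde P_{12})$, and taking the logarithm of the reciprocal gives $\mathcal{G}_{P_1}(Z,\Theta)=-\log\cosh\mathbf{d}(\tilde Q,\tilde P_{12})=\log\frac{1}{\cosh\mathbf{d}(\tilde Q,\tilde P_{12})}$. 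So all the content is in evaluating $\cosh\mathbf{d}(\tilde Q,\tilde P_{12})$.

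For that I would proceed in three steps. First, verify that the curve $\tilde{\mathcal{G}}_{II}$ written in the construction genuinely solves the supergeodesic system for the metric \eqref{ss} recorded before Proposition \ref{7}: substitute $Z(u),\Theta(u)$ and their $u$-derivatives, expand through first order in the single nilpotent $\Theta\bar\Theta$ (using $Y=\mathrm{Im}(Z(u))+\tfrac12\Theta(u)\bar\Theta(u)$), and check both equations together with their conjugates. This is the super-refinement of the classical fact that a semicircle meeting the boundary orthogonally is a geodesic; the odd correction $-\tfrac{i\Theta\bar\Theta e^{\omega(u+u_0)}}{4\cosh\omega(u+u_0)}$ to the even part of $Z(u)$ is precisely what the term $\tfrac{\bar\Theta}{Y}\frac{dZ}{du}\frac{d\Theta}{du}$ forces, and one also checks that $\tilde{\mathcal{G}}_{II}$ limits to $P_1$ as $\epsilon_0(u)\to-\infty$ and to $P_2$ as $\epsilon_0(u)\to+\infty$. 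Second, identify $\tilde P_{12}$ as the point of $\tilde{\mathcal{G}}_{II}$ cut out by $\epsilon_0(\tilde P_{12})=\epsilon_0(P^Q_{12})$: the body of $P^Q_{12}$ is already computed before Proposition \ref{8}, so one solves $\epsilon_0(Z(u))=\tfrac{|Z|}{2+|Z|^2}+i\tfrac{|Z|\sqrt{1+|Z|^2}}{2+|Z|^2}$ for $\epsilon_0(u)$ and substitutes back into the full (nilpotent-carrying) formulas for $Z(u),\Theta(u)$ to obtain the displayed $\tilde P_{12}$. Third, feed $\tilde Q$ and $\tilde P_{12}$ into the Uehara--Yasui superdistance of the preceding definition: compute $Y_1,Y_2$, then $R(\tilde Q,\tilde P_{12})$ and $r(\tilde Q,\tilde P_{12})$, and assemble $1+\tfrac12 R-2r$; the nilpotent part of $\tilde Q$ --- the coefficient $\tfrac{2|Z|+\sqrt{1+|Z|^2}}{2|Z|(2+|Z|^2-|Z|\sqrt{1+|Z|^2})}$ of $\Theta\bar\Theta$ in its $t$-coordinate --- is engineered exactly so that the answer collapses to $\sqrt{1+1/|Z|^2}+\frac{\Theta\bar\Theta}{2|Z|\sqrt{1+|Z|^2}}$, which by $\sqrt a+\frac{b}{2\sqrt a}=\sqrt{a+b}$ re-expands as the single square root above. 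The $OSp(1|2,\mathbb{R})$-invariance of $\mathbf{d}$ noted after its definition is available as a consistency check on intermediate normalizations.

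The main obstacle is the Grassmann bookkeeping in the third step. The superdistance formula is affine in the odd data, but $\tilde P_{12}$ and $\tilde Q$ already carry $\Theta\bar\Theta$-corrections in several denominators, so each quotient must be expanded as $\frac{1}{A+B\Theta\bar\Theta}=\frac1A-\frac{B}{A^2}\Theta\bar\Theta$ and only terms through first order in $\Theta\bar\Theta$ retained (higher powers vanish, there being a single conjugate pair of odd coordinates). After the bosonic part is matched against Proposition \ref{8} and the $\Theta\bar\Theta$-coefficients are seen to recombine under the square root, the identity --- and hence the proposition --- follows. No geometric input beyond Propositions \ref{7}, \ref{8}, Corollary \ref{d} and the Uehara--Yasui distance is required; Proposition \ref{9} is in the end a repackaging of Corollary \ref{d} once $\mathbf{d}(\tilde Q,\tilde P_{12})$ has been computed.
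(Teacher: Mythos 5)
Your proposal is correct and follows essentially the same route as the paper: the paper's entire argument is the displayed computation immediately preceding the proposition, namely evaluating the Uehara--Yasui quantity $\cosh\mathbf{d}(\tilde Q,\tilde P_{12})$ for the stated points, observing the collapse $\sqrt{a}+\tfrac{b}{2\sqrt{a}}=\sqrt{a+b}$ to get $\sqrt{1+\tfrac{1}{|Z|^2}+\tfrac{\Theta\bar\Theta}{|Z|^2}}$, and matching this against $G_{P_1}^{-1}$ from Corollary \ref{d}. Your three-step organization (geodesic verification, identification of $\tilde P_{12}$ via $\epsilon_0(\tilde P_{12})=\epsilon_0(P^Q_{12})$, Grassmann expansion of $R$ and $r$) is exactly the paper's computation made explicit, and is if anything more careful than the source about justifying the intermediate claims.
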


Next, we investigate the same problem for the supertorus. A supertorus $T^{1|1}=\mathbb{C}^{1|1}/\Gamma$ is defined as the quotient of the complex superplane $\mathbb{C}^{1|1}$, with the  coordinates $(Z,\Theta)$,
by the  supertranslation group $\Gamma$ generated by the transformations
\begin{align*}
  T:\ \  &Z\mapsto Z+1,\ \  \Theta\mapsto \Theta,\\
  S:\ \  & Z\mapsto Z+\tau+\Theta\delta,\ \ \Theta\mapsto\Theta+\delta,
\end{align*}
where the pair  $(\tau, \delta)\in (\Lambda_\mathbb{C}^\infty)_0\times (\Lambda_\mathbb{C}^\infty)_1$ with $\epsilon_0(\mathrm{Im}(\tau))>0$ is called the supermoduli of $T^{1|1}$ \cite{f}.
The Jacobi  theta function on the ordinary torus $\mathbb{C}/(\mathbb{Z}+\tau\mathbb{Z})$ is given by
\begin{align*}
 \vartheta(Z;\tau)&=-i\sum_{n\in \mathbb{Z}}(-1)^nq^{(n+\frac{1}{2})^2}e^{(2n+1)\pi iZ}\\
 &=i\rho^{\frac{1}{2}}q^{\frac{1}{8}}\prod_{n=1}^\infty (1-q^n)(1-\rho q^n)(1-\rho^{-1}q^{n-1})
\end{align*}
for $\rho=e^{2\pi i Z}, q=e^{\pi i\tau}$, which satisfies
\begin{align*}
 \vartheta(Z+1;\tau)&=-\vartheta(Z;\tau)=\vartheta(-Z;\tau),\\
 \vartheta(Z+\tau;\tau)&=-(\rho q)^{-1}\vartheta(Z;\tau).
\end{align*}
As an  analog,  we have the super-theta function, which was firstly introduced by  Rabin and Freund \cite{f,ra},
$$\mathcal{T}(Z,\Theta;\tau,\delta)=\vartheta(Z;\tau+\Theta\delta),$$
then one easily checks the following proposition.
\begin{proposition}The super-theta function satisfies the properties:
\begin{align*}
\mathcal{T}(Z,\Theta;\tau,\delta)&=\vartheta(Z;\tau)+\Theta\delta\dot\vartheta(Z;\tau),\\
 \mathcal{T}(Z+1,\Theta;\tau,\delta)&=-\mathcal{T}(Z,\Theta;\tau,\delta)=\mathcal{T}(-Z,\Theta;\tau,\delta),\\
 \mathcal{T}(Z+\tau+\Theta\delta,\Theta+\delta;\tau,\delta)&=-(1-\pi i\Theta\delta)q^{-1} e^{-2\pi iZ}\mathcal{T}(Z,\Theta;\tau,\delta),\\
  \mathcal{T}^\prime(0,\Theta;\tau,\delta)&=-2\pi q^{\frac{1}{8}}\prod_{n=1}^\infty(1-q^n)^3(1+\frac{\pi i}{4}\Theta\delta-6\pi i\sum_{m=1}^\infty \frac{mq^m}{1-q^m}\Theta\delta),
\end{align*}
where   the symbols dot and prime denote the partial derivatives with respect to $\tau$ and $Z$, respectively.
\end{proposition}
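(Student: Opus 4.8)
The plan is to derive all four identities from the classical identities for the Jacobi theta function $\vartheta$ by one and the same device: substitute $\tau\mapsto\tau+\Theta\delta$ and expand in the even but \emph{nilpotent} element $\Theta\delta$, which squares to zero since $\Theta$ and $\delta$ are odd. Because $(\Theta\delta)^2=0$, every holomorphic function of the modulus is recovered exactly from its first-order Taylor polynomial in $\Theta\delta$; applying this to $\vartheta(Z;\tau)$ regarded as a function of $\tau$ gives at once $\mathcal{T}(Z,\Theta;\tau,\delta)=\vartheta(Z;\tau+\Theta\delta)=\vartheta(Z;\tau)+\Theta\delta\,\dot\vartheta(Z;\tau)$, which is the first identity. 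The only preliminary point to check is that this termination is legitimate, i.e.\ that $\vartheta$ (given by its convergent series, equivalently its triple product, in the modulus) admits term-by-term differentiation in the modulus when $\epsilon_0(\mathrm{Im}\,\tau)>0$; this is routine from $|q|<1$, and once granted the substitution $\tau\mapsto\tau+\Theta\delta$ is unambiguous.

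Given that, the second identity is immediate: $\vartheta(Z+1;\sigma)=-\vartheta(Z;\sigma)=\vartheta(-Z;\sigma)$ holds for every $\sigma$ in the upper half plane and hence persists under $\sigma=\tau+\Theta\delta$. For the third identity the key remark I would record first is that the lattice shift does not move the modulus: since $\delta$ is odd, $\delta^2=0$, so $(\Theta+\delta)\delta=\Theta\delta$, whence $\mathcal{T}(Z+\tau+\Theta\delta,\Theta+\delta;\tau,\delta)=\vartheta(Z+\tau+\Theta\delta;\tau+\Theta\delta)$. I would then apply the classical quasi-periodicity $\vartheta(Z+\tau;\tau)=-(\rho q)^{-1}\vartheta(Z;\tau)$ with $\tau$ replaced by $\tau+\Theta\delta$, together with $e^{\pi i(\tau+\Theta\delta)}=q\,e^{\pi i\Theta\delta}=q(1+\pi i\Theta\delta)$, so that $(\rho\, e^{\pi i(\tau+\Theta\delta)})^{-1}=e^{-2\pi iZ}q^{-1}(1-\pi i\Theta\delta)$; the stated formula follows.

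The fourth identity is the only one that calls for real bookkeeping. Since $\partial_Z$ commutes with the substitution, $\mathcal{T}'(0,\Theta;\tau,\delta)=\vartheta'(0;\tau+\Theta\delta)$, so I would insert $\tau+\Theta\delta$ into the Jacobi derivative formula $\vartheta'(0;\tau)=-2\pi\,q^{1/8}\prod_{n\geq1}(1-q^n)^3$, $q$ being the nome of the triple product. Under $\tau\mapsto\tau+\Theta\delta$ the nome scales multiplicatively, $q\mapsto q\,e^{2\pi i\Theta\delta}=q(1+2\pi i\Theta\delta)$; the prefactor then becomes $q^{1/8}\bigl(1+\frac{\pi i}{4}\Theta\delta\bigr)$, and each product factor becomes $1-q^n(1+2\pi in\Theta\delta)=(1-q^n)\bigl(1-\frac{2\pi in q^n}{1-q^n}\Theta\delta\bigr)$. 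Cubing and taking the infinite product — all cross terms dying because they carry $(\Theta\delta)^2$ — the product collapses to $\prod_{n\geq1}(1-q^n)^3\bigl(1-6\pi i\sum_{m\geq1}\frac{mq^m}{1-q^m}\Theta\delta\bigr)$, and multiplying the two first-order factors reproduces the asserted expression exactly. The main obstacle is precisely this computation: correctly combining the $\Theta\delta$-contributions of the nome arising from the eighth-root prefactor and from the Euler-type product, and verifying that the infinite product of linear corrections telescopes into the single logarithmic-derivative sum $\sum_{m\geq1}mq^m/(1-q^m)$, which converges absolutely for $|q|<1$. The remaining three identities are then pure transcription of the classical theta calculus into the Grassmann-valued setting, the odd data entering only through the nilpotent $\Theta\delta$.
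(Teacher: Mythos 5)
Your proof is correct and is precisely the argument the paper intends (the paper itself offers nothing beyond ``one easily checks''): substitute $\tau\mapsto\tau+\Theta\delta$ into the classical theta identities and expand to first order in the nilpotent even element $\Theta\delta$, using $(\Theta+\delta)\delta=\Theta\delta$ for the quasi-periodicity and the telescoping of the linear corrections in the triple product for the derivative formula. The one point worth flagging is that your third identity takes the nome to be $q=e^{\pi i\tau}$ while your fourth takes $q=e^{2\pi i\tau}$ (``the nome of the triple product''); this inconsistency is inherited from the paper, whose displayed product formula only holds with $q=e^{2\pi i\tau}$ despite its declaration $q=e^{\pi i\tau}$, and your bookkeeping matches each stated identity under the convention that makes it true.
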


Following the classical Arakelov-Green function on the usual torus \cite{l}, we introduce the super-Green function $\mathcal{G}(Z,\Theta)$ on $T^{1|1}$ as follows
\begin{align*}
 \mathcal{G}(Z,\Theta)=&\log|\frac{ \mathcal{T}(Z,\Theta;\tau,\delta)}{ \mathcal{T}^\prime(0,\Theta;\tau,\delta)}|-2\pi(\frac{(\mathrm{Im}(Z))^2+2\Theta\bar \Theta}{2\mathrm{Im}(\tau+\Theta\delta)}\\
 &-\frac{1}{\pi}\sum_{n=1}^\infty\log|1-q^n-2\pi i\Theta\delta n q^n|+\frac{1}{12}\mathrm{Im}(\tau+\Theta\delta)-\frac{\log2\pi}{2\pi}).
\end{align*}
By the same manner of supergeometric extension, one can  generalize  the classical  Faltings invariant on the torus \cite{m,we,mb} to a superfunction $F(\Theta)$ that depends only on the odd coordinate of the supertorus
\begin{align*}
  F(\Theta)=&-\frac{1}{2\pi}\log|(\mathrm{Im}(\tau+\Theta\delta))^6(\mathcal{T}^\prime(0,\Theta;\tau,\delta))^8|\\
  =&-\frac{3}{\pi}\log|\mathrm{Im}(\tau+\Theta\delta)|-\frac{12}{\pi}\sum_{n=1}^\infty\log|1-q^n|\\
 & + \mathrm{Im}(\tau+\Theta\delta)- \mathrm{Im}(\sum_{m=1}^\infty\frac{mq^m}{1-q^m}\Theta\delta)-\frac{\log2\pi}{\pi}.
\end{align*}
\begin{proposition}
\begin{enumerate}
  \item $\mathcal{G}(Z,\Theta)$ is invariant under the supertranslations.
  \item  $\lim\limits_{(Z,\Theta)\rightarrow(0,0)}\frac{\mathcal{G}(Z,\Theta)}{\log|Z|}=1$.
  \item $\mathcal{G}(Z,\Theta)$ can be rewritten as
  \begin{align*}
  \mathcal{G}(Z,\Theta)=&  \log|q^{\frac{B_2(\frac{\mathrm{Im}(Z)}{\mathrm{Im}(\tau)})}{2}}(1-\rho)\prod_{n=1}^\infty(1-\rho q^n)(1-\rho^{-1}q^n)|\\
  &+2\pi(\sum_{m=1}^\infty\mathrm{Im}(\frac{(2-\rho q^m-\rho^{-1}q^m)mq^m}{(1-\rho q^m)(1-\rho^{-1}q^m)}\Theta\delta)
  +(\frac{1}{2}(\frac{\mathrm{Im}(Z)}{\mathrm{Im}(\tau)})^2-\frac{1}{12})\mathrm{Im}(\Theta\delta)\\
  &-(\frac{1}{\mathrm{Im}(\tau)}+\frac{(\mathrm{Im}(Z))^2}{4(\mathrm{Im}(\tau))^3}\delta\bar\delta)\Theta\bar\Theta),
  \end{align*}
  where $B_2(y)=y^2-y+\frac{1}{6}$ is the second Bernoulli polynomial, and the first term  on the right-hand side of the equal sign is recognized as the N\'{e}ron function \cite{l}.
  \item The second-order partial derivatives of $\mathcal{G}(Z,\Theta)$ outside the singular locus of $G(Z,\Theta)$ read
  \begin{align*}
    -\frac{1}{2\pi}\frac{\partial^2 \mathcal{G}(Z,\Theta)}{\partial Z\partial \bar Z}&=\frac{1}{4\mathrm{Im}(\tau)}-\frac{\mathrm{Im}(\Theta\delta)}{4(\mathrm{Im}(\tau))^2}+\frac{\delta\bar\delta\Theta\bar\Theta}{8(\mathrm{Im}(\tau))^3},\\
    -\frac{1}{2\pi} \frac{\partial^2 \mathcal{G}(Z,\Theta)}{\partial Z\partial \bar \Theta}&=\frac{\mathrm{Im}(Z)}{4(\mathrm{Im}(\tau))^2}\bar \delta+\frac{i\mathrm{Im}(Z)}{4(\mathrm{Im}(\tau))^3}\delta\bar \delta\Theta,\\
    -\frac{1}{2\pi} \frac{\partial^2 \mathcal{G}(Z,\Theta)}{\partial \Theta\partial \bar Z}&=-\frac{\mathrm{Im}(Z)}{4(\mathrm{Im}(\tau))^2} \delta+\frac{i\mathrm{Im}(Z)}{4(\mathrm{Im}(\tau))^3}\delta\bar \delta\bar\Theta,\\
     -\frac{1}{2\pi}\frac{\partial^2 \mathcal{G}(Z,\Theta)}{\partial \Theta\partial \bar \Theta}&=-\frac{1}{\mathrm{Im}(\tau)}-\frac{(\mathrm{Im}(Z))^2}{4(\mathrm{Im}(\tau))^3}\delta\bar\delta.
  \end{align*}
\end{enumerate}
\end{proposition}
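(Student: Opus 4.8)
The plan is to prove the four assertions in turn, in each case separating off the body ($\Theta=\delta=0$) statement --- which is the classical computation for the torus $\mathbb{C}/(\mathbb{Z}+\tau\mathbb{Z})$ of \cite{l} --- and then controlling the finitely many nilpotent corrections using $(\Theta\delta)^2=0$, the identity $(\mathrm{Im}(\Theta\delta))^2=\frac{1}{2}\delta\bar\delta\Theta\bar\Theta$, and the graded Leibniz rule (so that e.g. $\partial_\Theta\partial_{\bar\Theta}(\Theta\bar\Theta)=-1$). For (1) it is enough to check invariance under the two generators $T,S$ of the supertranslation group. Under $T$ this is immediate, since $\mathcal{T}(Z+1,\Theta;\tau,\delta)=-\mathcal{T}(Z,\Theta;\tau,\delta)$, $\mathcal{T}^\prime(0,\Theta;\tau,\delta)$ does not involve $Z$, and $\mathrm{Im}(Z)$ is unchanged. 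Under $S$ one inserts the quasi-periodicity law $\mathcal{T}(Z+\tau+\Theta\delta,\Theta+\delta;\tau,\delta)=-(1-\pi i\Theta\delta)q^{-1}e^{-2\pi iZ}\mathcal{T}(Z,\Theta;\tau,\delta)$ from the preceding proposition; the extra term $\log|(1-\pi i\Theta\delta)q^{-1}e^{-2\pi iZ}|$ picked up by $\log|\mathcal{T}|$ must then be matched against the variation of $-2\pi\frac{(\mathrm{Im}(Z))^2+2\Theta\bar\Theta}{2\mathrm{Im}(\tau+\Theta\delta)}$ and of the two nilpotent-corrected sums, using $\mathrm{Im}(Z)\mapsto\mathrm{Im}(Z)+\mathrm{Im}(\tau+\Theta\delta)$, $\Theta\bar\Theta\mapsto(\Theta+\delta)(\bar\Theta+\bar\delta)$, and the fact that $\mathrm{Im}(\tau+\Theta\delta)$ is itself $S$-invariant because $(\Theta+\delta)\delta=\Theta\delta$; the cancellation at body level is the classical one, and the odd part is a short additional check. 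For (2), $\mathcal{T}(0,\Theta;\tau,\delta)=\vartheta(0;\tau)+\Theta\delta\dot\vartheta(0;\tau)=0$ since $\vartheta(\,\cdot\,;\tau)$ vanishes identically at the origin, whereas $\mathcal{T}^\prime(0,\Theta;\tau,\delta)$ is invertible (its body equals $\vartheta^\prime(0;\tau)\ne 0$); hence $\mathcal{T}(Z,\Theta;\tau,\delta)/\mathcal{T}^\prime(0,\Theta;\tau,\delta)=Z(1+O(Z)+O(\Theta\delta))$, so $\log|\mathcal{T}/\mathcal{T}^\prime|=\log|Z|+O(1)$ near the origin, while every remaining term of $\mathcal{G}$ stays bounded as $(Z,\Theta)\to(0,0)$; dividing by $\log|Z|$ and passing to the limit gives $1$.

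For (3), substitute the Jacobi product formula for $\vartheta$ and the product formula for $\mathcal{T}^\prime(0,\Theta;\tau,\delta)$ (both from the preceding propositions) into the definition of $\mathcal{G}$, write $\mathcal{T}(Z,\Theta;\tau,\delta)=\vartheta(Z;\tau)(1+\Theta\delta\,\partial_\tau\log\vartheta(Z;\tau))$, and expand all logarithms, the factor $\mathrm{Im}(\tau+\Theta\delta)^{-1}$, and the infinite sums in the even nilpotent $\Theta\delta$, carried to second order because $(\mathrm{Im}(\Theta\delta))^2\ne0$. At $\Theta=\delta=0$ this is exactly the classical identification of the Arakelov--Green function on the torus with the N\'eron function: the factor $q^{B_2(\mathrm{Im}(Z)/\mathrm{Im}(\tau))/2}$ arises by combining $\log|\rho^{1/2}q^{1/8}|$, the term $-2\pi(\mathrm{Im}(Z))^2/(2\mathrm{Im}(\tau))$, and the constant $-\frac{2\pi}{12}\mathrm{Im}(\tau)$, while all the other logarithms of $(1-q^n)$ cancel. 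It then remains to collect, separately, the coefficient of $\Theta\delta$ (which receives contributions from $\partial_\tau\log\vartheta(Z;\tau)$, from the first-order part of $\mathrm{Im}(\tau+\Theta\delta)^{-1}$, from the corrected sums $\sum_n\log|1-q^n-2\pi i\Theta\delta nq^n|$, and from the $\Theta\delta$-term of $\mathcal{T}^\prime(0,\Theta;\tau,\delta)$), the coefficient of $\Theta\bar\Theta$ (from the $2\Theta\bar\Theta$ in the numerator), and the coefficient of $\delta\bar\delta\Theta\bar\Theta$ (from the second-order part of $\mathrm{Im}(\tau+\Theta\delta)^{-1}$); reassembling these, and recognizing $B_2(y)=y^2-y+\frac{1}{6}$, yields the displayed formula, whose first bracket is by inspection the N\'eron function.

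For (4), each of the factors $1-\rho$, $1-\rho q^n$, $1-\rho^{-1}q^n$ is holomorphic in $Z$, so $\partial_Z\partial_{\bar Z}\log|\,\cdot\,|=0$ off its zero locus, and likewise every holomorphic ingredient of $\mathcal{G}$ is annihilated by the relevant mixed second derivative; therefore only the manifestly non-holomorphic terms contribute, namely the polynomial $\log|q^{B_2(\mathrm{Im}(Z)/\mathrm{Im}(\tau))/2}|$ in $\mathrm{Im}(Z)$, the $\mathrm{Im}(\Theta\delta)$-terms carrying $(\mathrm{Im}(Z)/\mathrm{Im}(\tau))^2$, and the $\Theta\bar\Theta$-term. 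Differentiating these directly --- with $\partial_Z\mathrm{Im}(Z)=\frac{1}{2i}$, $\partial_{\bar Z}\mathrm{Im}(Z)=-\frac{1}{2i}$, $\partial_\Theta\partial_{\bar\Theta}(\Theta\bar\Theta)=-1$, and the graded Leibniz rule --- produces the four stated expressions. The only genuinely laborious point in the whole argument is the nilpotent bookkeeping in (3): one must propagate the substitution $\tau\rightsquigarrow\tau+\Theta\delta$ simultaneously through the logarithm of the theta product, through the non-holomorphic quadratic term, and through the two infinite sums, and verify that all the $\Theta\delta$-, $\Theta\bar\Theta$- and $\delta\bar\delta\Theta\bar\Theta$-contributions reorganize into the compact form stated; everything else reduces to routine differentiation or to the classical torus computation of \cite{l}.
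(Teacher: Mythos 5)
Your proposal is correct in method and follows essentially the same route as the paper: every part reduces to the classical torus computation of \cite{l} plus bookkeeping of the nilpotent corrections, and your (2) and (4) are verbatim the paper's arguments. Two points of comparison are worth recording. In (3) the paper does not expand everything at once as you do; it first proves the compact identity
$\prod_{n\ge1}(1-\rho\mathfrak{q}^n)(1-\rho^{-1}\mathfrak{q}^n)=2\pi i\prod_{n\ge1}(1-\mathfrak{q}^n)^2\,\mathcal{T}(Z,\Theta;\tau,\delta)\big/\bigl(\rho^{1/2}(1-\rho^{-1})\,\mathcal{T}^\prime(0,\Theta;\tau,\delta)\bigr)$
with the super-nome $\mathfrak{q}=e^{2\pi i(\tau+\Theta\delta)}$, so that $\mathcal{G}$ takes the closed form $\log|\mathfrak{q}^{B_2(\mathrm{Im}(Z)/\mathrm{Im}(\tau+\Theta\delta))/2}(1-\rho)\prod(1-\rho\mathfrak{q}^n)(1-\rho^{-1}\mathfrak{q}^n)|-2\pi\Theta\bar\Theta/\mathrm{Im}(\tau+\Theta\delta)$ \emph{before} the $\Theta\delta$-expansion; this organizes the same computation more cleanly and is what makes (4) a one-line differentiation. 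In (1) you are strictly more careful than the paper, which only asserts that invariance ``follows from the invariance of the classical Arakelov-Green function'' and never touches the odd sector; but note that the ``short additional check'' you defer is precisely the nontrivial point. Under $S$ the quasi-periodicity factor contributes $\log|1-\pi i\Theta\delta|+\pi\,\mathrm{Im}(\tau)+2\pi\,\mathrm{Im}(Z)=\pi\,\mathrm{Im}(\Theta\delta)+\pi\,\mathrm{Im}(\tau)+2\pi\,\mathrm{Im}(Z)$, which cancels exactly against the shift of $-2\pi(\mathrm{Im}(Z))^2/(2\mathrm{Im}(\tau+\Theta\delta))$, while the term $-2\pi\cdot 2\Theta\bar\Theta/(2\mathrm{Im}(\tau+\Theta\delta))$ shifts by $-2\pi(\Theta\bar\delta+\delta\bar\Theta+\delta\bar\delta)/\mathrm{Im}(\tau+\Theta\delta)$ with nothing left over to absorb it; so this residual must either be shown to vanish in the intended sense or be accounted for explicitly, and you should not present the odd-sector cancellation as automatic.
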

 \begin{proof} (1) It obviously follows from the invariance of the  classical Arakelov-Green function.

(2) We only need to note that
  $$\frac{ \mathcal{T}(Z,\Theta;\tau,\delta)}{ \mathcal{T}^\prime(0,\Theta;\tau,\delta)}\stackrel{(Z,\Theta)\sim (0,0)}{\sim}Z.$$
  Hence $ \mathcal{G}(Z,\Theta)\stackrel{(Z,\Theta)\sim (0,0)}{\sim}\log|Z|$.

 (3) Letting $\mathfrak{q}=e^{2\pi i(\tau+\Theta\delta)}$,  we have the  identity
 \begin{align*}
   \prod_{n=1}^\infty(1-\rho \mathfrak{q}^n)(1-\rho^{-1}\mathfrak{q}^n)&=\frac{-2\pi i\rho^{\frac{1}{2}}(1-\rho^{-1})\mathfrak{q}^{\frac{1}{8}}\prod_{n=1}^\infty(1-\mathfrak{q}^n)^3(1-\rho\mathfrak{q}^n)(1-\rho^{-1}\mathfrak{q}^n)}{-2\pi i\rho^{\frac{1}{2}}(1-\rho^{-1})\mathfrak{q}^{\frac{1}{8}}\prod_{n=1}^\infty(1-\mathfrak{q}^n)^3}\\
  &=\frac{2\pi i\prod_{n=1}^\infty(1-\mathfrak{q}^n)^2\mathcal{T}(Z,\Theta;\tau,\delta)}{\rho^{\frac{1}{2}}(1-\rho^{-1}) \mathcal{T}^\prime(0,\Theta;\tau,\delta)},
 \end{align*}
which yields
\begin{align*}
  \mathcal{G}(Z,\Theta)=&\log|\mathfrak{q}^{\frac{B_2(\frac{\mathrm{Im}(Z)}{\mathrm{Im}(\tau+\Theta\delta)})}{2}}(1-\rho)\prod_{n=1}^\infty(1-\rho \mathfrak{q}^n)(1-\rho^{-1}\mathfrak{q}^n)|-2\pi\frac{\Theta\bar \Theta}{\mathrm{Im}(\tau+\Theta\delta)}\\=& \log|(1-\rho)\prod_{n=1}^\infty(1-\rho \mathfrak{q}^n)(1-\rho^{-1}\mathfrak{q}^n)|\\
  &-2\pi(\frac{(\mathrm{Im}(Z))^2+2\Theta\bar\Theta}{2\mathrm{Im}(\tau+\Theta\delta)}-\frac{\mathrm{Im}(Z)}{2}+\frac{1}{12}\mathrm{Im}(\tau+\Theta\delta)).
\end{align*}
Hence (3) follows.

(4) These second-order partial derivatives can be directly calculated by means of  the formula in (3).\qed
 \end{proof}

From the above proposition, we see that  in order for  illustrating $\mathcal{G}(Z,\Theta)$  as a super-Green function we also need the following proposition.
 \begin{proposition}The supertorus $T^{1|1}$ can be equipped with a supermetric
 \begin{align*}
   ds^2=&\ \frac{1}{\mathrm{Im}(\tau)}[(1-\frac{\mathrm{Im}(\Theta\delta)}{\mathrm{Im}(\tau)}+\frac{\delta\bar\delta\Theta\bar\Theta}{2(\mathrm{Im}(\tau))^2})dZd\bar Z-(\frac{\mathrm{Im}(Z)}{\mathrm{Im}(\tau)}\bar \delta+\frac{i\mathrm{Im}(Z)}{(\mathrm{Im}(\tau))^2}\delta\bar \delta\Theta)dZd\bar \Theta\\
   &+(\frac{\mathrm{Im}(Z)}{\mathrm{Im}(\tau)} \delta-\frac{i\mathrm{Im}(Z)}{(\mathrm{Im}(\tau))^2}\delta\bar \delta\bar\Theta)d\Theta d\bar Z+
  (1+ \frac{(\mathrm{Im}(Z))^2}{(\mathrm{Im}(\tau))^2}\delta\bar\delta )d\Theta d\bar\Theta].
 \end{align*}
 \end{proposition}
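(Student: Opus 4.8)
The supermetric in the statement is the one that makes $\mathcal{G}(Z,\Theta)$ a super-Green function in the sense of the definition, and the natural way to produce it is to read its coefficients off the second-order partial derivatives of $\mathcal{G}$ computed in part~(4) of the preceding proposition. Concretely, I would write $g=\sum_{A,B}g_{A\bar B}\,dX_A\,dX_{\bar B}$ with $\{X_A\}=\{Z,\Theta\}$, take the $dZ\,d\bar Z$, $dZ\,d\bar\Theta$, $d\Theta\,d\bar Z$, $d\Theta\,d\bar\Theta$ coefficients from the four formulas for $-(-1)^{|X_{\bar B}|}\partial_A\partial_{\bar B}\mathcal{G}$ there (with the normalization used in part~(4)), and check that they reassemble into the displayed $ds^2$; the only subtlety in this bookkeeping is the Koszul signs $(-1)^{|X_{\bar B}|}$ on the two mixed even-odd entries, together with the overall constant. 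The substantive part of the proposition is then that the resulting tensor really is a supermetric on the quotient $T^{1|1}=\mathbb{C}^{1|1}/\Gamma$, which I would establish in two steps.

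Step one: $g$ is an even, graded-symmetric, non-degenerate $\mathcal{O}$-bilinear form. Evenness and graded-symmetry are visible from the formula. For non-degeneracy, the body (zero-order) part of $g$ is the flat metric $\frac{1}{\mathrm{Im}(\tau)}|dz|^2$ on the torus $\mathbb{C}/(\mathbb{Z}+\tau\mathbb{Z})$ -- which has total area $1$, in agreement with the classical Arakelov normalization -- so the Berezinian of the super-matrix $(g_{A\bar B})$ has nonzero body and is therefore a unit of $\Lambda^\infty_{\mathbb{R}}$.

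Step two, the delicate one: $g$ descends to $T^{1|1}$, even though its coefficients contain $\mathrm{Im}(Z)$, $\Theta$ and $\bar\Theta$, which are not preserved by the generator $S$ of $\Gamma$. The clean argument: $\mathcal{G}$ is $\Gamma$-invariant (part~(1) of the preceding proposition), and $T$ and $S$ act by superconformal, hence $(Z,\Theta)$-holomorphic, automorphisms of $\mathbb{C}^{1|1}$; under such automorphisms the assignment $\varphi\mapsto -(-1)^{|X_{\bar B}|}dX_A(\partial_A\partial_{\bar B}\varphi)\,dX_{\bar B}$ is natural, so it carries the invariant superfunction $\mathcal{G}$ to a $\Gamma$-invariant symmetric $2$-tensor, which is $g$. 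A hands-on alternative is to verify invariance under $T$ (immediate) and under $S:\;Z\mapsto Z+\tau+\Theta\delta,\ \Theta\mapsto\Theta+\delta$, using $dZ\mapsto dZ+\delta\,d\Theta$, $d\Theta\mapsto d\Theta$ together with $\mathrm{Im}(Z)\mapsto\mathrm{Im}(Z)+\mathrm{Im}(\tau)+\mathrm{Im}(\Theta\delta)$, and cancelling the induced terms order by order in $\delta$ and $\bar\delta$.

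The main obstacle I anticipate is precisely step two -- confirming $S$-invariance in spite of the explicit $\mathrm{Im}(Z)$, $\Theta$, $\bar\Theta$ dependence -- but the Hessian description of $g$ reduces it to the already-known invariance of $\mathcal{G}$, so it should go through cleanly; the remaining work (the substitution in step one, and the sign/normalization bookkeeping) is routine. Establishing in addition that $\mathcal{G}$ is a genuine super-Green function would further require parts~(1)--(3) of the preceding proposition and the classical Arakelov--Green triple on the torus for the body-level condition, but that lies outside the present statement.
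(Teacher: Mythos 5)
Your proposal is essentially correct, but your primary route is genuinely different from the paper's. The paper proves only the nontrivial point---invariance under the generator $S$---and does so by direct computation: it first repackages the displayed metric into a ``concentrated'' form in which every $\mathrm{Im}(\tau)$ in a denominator is replaced by $\mathrm{Im}(\tau+\Theta\delta)$ (the formula in the statement being the expansion of that form in powers of $\Theta\delta$), and then writes out the image of each of the four summands under $S$ and watches the cross terms cancel. This is exactly your ``hands-on alternative,'' and the concentrated rewriting is the device that makes the substitution short; if you carry out your order-by-order cancellation you will effectively rediscover it. Your main argument---deducing $\Gamma$-invariance of $g$ from the invariance of $\mathcal{G}$ together with naturality of the super-$\partial\bar\partial$ Hessian under the holomorphic automorphisms $T,S$---is more conceptual and explains \emph{why} the tensor descends, but it carries dependencies the paper's computation avoids: it leans on part~(1) of the preceding proposition (whose proof there is itself only a one-line appeal to the classical case); it requires the Hessian of $\mathcal{G}$ from part~(4) to reproduce $g$ up to a single overall constant, and if you actually compare the entries this is not clean (the $dZ\,d\bar Z$ entries match up to a factor, but the $d\Theta\,d\bar\Theta$ entries differ by $-1$ on the body and by $-\tfrac14$ on the $\delta\bar\delta$ correction, so the identification needs care and may expose a normalization discrepancy in the paper); and it only yields invariance off the singular locus of $\mathcal{G}$, after which one must invoke the real-analyticity of the coefficients to conclude everywhere. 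On the other hand, your step one (non-degeneracy, via the body of $g$ being the flat area-one Arakelov metric) addresses a point the paper's proof omits entirely.
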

 \begin{proof}By the concentrated expression of  the supermetric
 \begin{align*}
   ds^2=& \frac{1}{\mathrm{Im}(\tau+\Theta\delta)}[dZ d\bar Z-\frac{\mathrm{Im}(Z)}{\mathrm{Im}(\tau+\Theta\delta)}\bar \delta dZ d\bar \Theta
   +\frac{\mathrm{Im}(Z)}{\mathrm{Im}(\tau+\Theta\delta)} \delta d\Theta d\bar Z\\
   &+(1+ \frac{(\mathrm{Im}(Z))^2}{(\mathrm{Im}(\tau+\Theta\delta))^2}\delta\bar\delta )d\Theta d\bar\Theta],
 \end{align*}
 one checks the invariance under the transformation $S$. Indeed, we have
 \begin{align*}
  dZ d\bar Z&\mapsto  dZ d\bar Z+\bar \delta dZ d\bar \Theta-\delta d\Theta d\bar Z+\delta\bar\delta d\Theta d\bar \Theta,\\
  \frac{\mathrm{Im}(Z)}{\mathrm{Im}(\tau+\Theta\delta)}\bar \delta dZ d\bar \Theta&\mapsto  \frac{\mathrm{Im}(Z)}{\mathrm{Im}(\tau+\Theta\delta)}\bar \delta dZ d\bar \Theta+\frac{\mathrm{Im}(Z)}{\mathrm{Im}(\tau+\Theta\delta)}\delta\bar\delta d\Theta d\bar \Theta+\bar \delta dZ d\bar \Theta+\delta\bar \delta d\Theta d\bar \Theta,\\
  \frac{\mathrm{Im}(Z)}{\mathrm{Im}(\tau+\Theta\delta)} \delta d\Theta d\bar Z&\mapsto\frac{\mathrm{Im}(Z)}{\mathrm{Im}(\tau+\Theta\delta)} \delta d\Theta d\bar Z-\frac{\mathrm{Im}(Z)}{\mathrm{Im}(\tau+\Theta\delta)}\delta\bar\delta d\Theta d\bar \Theta+\delta d\Theta d\bar Z-\delta\bar \delta d\Theta d\bar \Theta,\\
   \frac{(\mathrm{Im}(Z))^2}{(\mathrm{Im}(\tau+\Theta\delta))^2}\delta\bar\delta d\Theta d\bar\Theta&\mapsto \frac{(\mathrm{Im}(Z))^2}{(\mathrm{Im}(\tau+\Theta\delta))^2}\delta\bar\delta d\Theta d\bar\Theta+\frac{2\mathrm{Im}(Z)}{\mathrm{Im}(\tau+\Theta\delta)}\delta\bar\delta d\Theta d\bar \Theta+\delta\bar \delta d\Theta d\bar \Theta.
 \end{align*}
 Thus the conclusion follows.\qed
 \end{proof}

Assume the odd moduli $\delta$ is valued in $(\Lambda^\infty_{\mathbb{R}})_1$, then the super-Poinc\'{a}re  extension $\tilde\Gamma$ of the supertranslation group generated  by the  transformations
\begin{align*}
 \tilde S:\ \ &x\mapsto x+1, y\mapsto y, t\mapsto t, \theta_1\mapsto \theta_1, \theta_2\mapsto \theta_2,\\
\tilde  T:\ \ & x+iy\mapsto x+iy+\tau+\theta_2\delta,t\mapsto t+\theta_1\delta, \theta_1\mapsto\theta_1,\theta_2\mapsto \theta_2+\delta
\end{align*}
acts on $\mathcal{H}^{3|2}$. Hence one regards $T^{1|1}$ as the boundary of $\mathcal{H}^{3|2}/\tilde\Gamma$, and the latter one is treated as a solid supertorus.
To avoid extra requirements on the odd moduli, we adopt the following approach of  extension instead of the  super-Poinc\'{a}re  extension. One defines the supertorus by the equivalence relation $(\rho;\Theta)\sim (\mathfrak{q}^n\rho; \Theta+n\delta)$ for $n\in \mathbb{Z}$ which can be  extended  to $\mathcal{H}^{3|2}$ as $(\rho, t;\Theta)\sim (\mathfrak{q}^n\rho, |\mathfrak{q}|^nt;\Theta+n\delta)$. Hence one views $T^{1|1}$ as the boundary of $\mathcal{H}^{3|2}/\sim$. For two equivalent  points $P_0=(0,1;0)$ and $Q_0 =(0,|q|;\delta)$ lying in $\mathcal{H}^{3|2}$, one calculates the superdistance between $P$ and $Q$ in some upper half superplane $\mathbb{C}\mathcal{H}^{1|1}$ as follows
\begin{align*}
  \cosh\mathbf{d}(P_0,Q_0)=1+\frac{(1-|q|)^2-2\delta\bar\delta}{2(|q|+\frac{\delta\bar \delta}{2})},
\end{align*}
namely, we have
$$\mathbf{d}(P_0,Q_0)=d(P_0,Q_0)+\mathbf{d}_1(P_0,Q_0)\delta\bar\delta=\log|q|+\frac{1+|q|}{2|q|(1-|q|)}\delta\bar\delta.$$
For a point $P=(\rho,\Theta)$ on the boundary of $\mathcal{H}^{3|2}$, there is a supergeodesic in $\mathbb{C}\mathcal{H}_{P_0P}^{1|}$ determined by the following equations
\begin{align*}
 \rho(u)&=(|\rho|-\frac{i\Theta\bar\Theta e^{ \omega (u+u_0)}}{4\cosh \omega (u+u_0)})[\tanh\omega( u+u_0)+i\sech\,  \omega (u+u_0)],\\
      \Theta(u)&=\frac{\Theta}{2}[1+\tanh  \omega (u+u_0)+i\mathrm{sech} \,  \omega (u+u_0)],\\
      \bar \rho(u)&=\overline{Z(u)}, \bar \Theta(u)=\overline{\Theta(u)},
\end{align*}
which joins $P$ and $\widehat P=(\frac{\Theta\bar \Theta}{4},|\rho|; \frac{1+i}{2}\Theta)$. Then the superdistance between $P_0$ and $\widehat P$ is given by
$$\mathbf{d}(P_0,\widehat P)=d(P_0,\widehat P)+\mathbf{d}_1(P_0,\widehat P)\Theta\bar\Theta=\log|\rho|+\frac{1+|\rho|}{4|\rho|(1-|\rho|)}\Theta\bar\Theta.$$
Similarly, for the point $\widetilde{P}_0=(0,1-\frac{|\rho|^2+1}{4|\rho|(|\rho|^2-1)}\Theta\bar\Theta;0)\in \mathcal{H}^{3|2}$, we have
$$\mathbf{d}(\widetilde{P}_0,\widehat P)=\log|\rho|.$$
As a consequence, we arrive at
\begin{proposition}\label{10}
The super-Green function on the supertorus $T^{1|1}$ can be expressed as
\begin{align*}
 \mathcal{ G}(Z,\Theta)=&\frac{1}{2}d(P_0,\widehat{\mathfrak{Q}})B_2(\frac{d(P_0,\widehat P)}{d(P_0,\widehat{\mathfrak{Q}})})+ d(P_0,\widehat {P_-})\\
  &+\sum_{n=1}^\infty(d(P_0, \widehat{\mathfrak{Q}^n})+d(P_0,\widehat{\mathfrak{Q}_-^n}))+\frac{4\pi^2}{d(P_0,\widehat{\mathfrak{Q}})}\Theta\bar \Theta\\
 =&\frac{1}{2}\mathbf{d}(\widetilde{P}_0,\widehat{\mathfrak{Q}})B_2(\frac{d(\widetilde{P}_0,\widehat P)}{\mathbf{d}(\widetilde{P}_0,\widehat{\mathfrak{Q}})})+ \mathbf{d}(\widetilde{P}_0,\widehat {P_-})\\
  &+\sum_{n=1}^\infty(\mathbf{d}(\widetilde{P}_0, \widehat{\mathfrak{Q}^n})+\mathbf{d}(\widetilde{P}_0,\widehat{\mathfrak{Q}_-^n}))+\frac{4\pi^2}{\mathbf{d}(\widetilde{P}_0,\widehat{\mathfrak{Q}})}\Theta\bar \Theta,
\end{align*}
where $\mathfrak{Q}=(\mathfrak{q},\Theta)$, $P_-=(1-\rho;\Theta)$, $\mathfrak{Q}^n=(1-\mathfrak{q}^n\rho,\Theta)$ and $\mathfrak{Q}^n_-=(1-\mathfrak{q}^n\rho^{-1},\Theta)$ are all points lying on the boundary of $\mathcal{H}^{3|2}$.
\end{proposition}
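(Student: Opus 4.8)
\emph{Proof proposal.} The plan is to deduce the two displayed formulas from the rewriting of $\mathcal{G}$ in part (3) of the proposition on $\mathcal{G}(Z,\Theta)$ above, which already presents the super-Green function as the logarithm of a N\'eron-type product plus a purely odd $\Theta\bar\Theta$-term, and then to identify each factor of that product with a bosonic superdistance (resp.\ with a full superdistance measured from the shifted basepoint $\widetilde P_0$) by invoking the foot-of-supergeodesic computation carried out immediately before the proposition. With $\rho=e^{2\pi iZ}$ and $\mathfrak{q}=e^{2\pi i(\tau+\Theta\delta)}$, part (3) gives
\[
\mathcal{G}(Z,\Theta)=\log\Bigl|\,\mathfrak{q}^{\frac{1}{2}B_2\left(\frac{\mathrm{Im}(Z)}{\mathrm{Im}(\tau+\Theta\delta)}\right)}(1-\rho)\prod_{n=1}^{\infty}(1-\rho\mathfrak{q}^{n})(1-\rho^{-1}\mathfrak{q}^{n})\,\Bigr|-\frac{2\pi\,\Theta\bar\Theta}{\mathrm{Im}(\tau+\Theta\delta)}.
\]

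Next I would set up the geometric dictionary. For a boundary point $P=(\sigma;\Theta)$ of $\mathcal{H}^{3|2}$ the supergeodesic written just before the proposition joins $P$ to $\widehat P=(\frac{\Theta\bar\Theta}{4},|\sigma|;\frac{1+i}{2}\Theta)$, and the two computations given there are exactly $d(P_0,\widehat P)=\log|\sigma|$ for the bosonic superdistance from $P_0=(0,1;0)$ and $\mathbf{d}(\widetilde P_0,\widehat P)=\log|\sigma|$ for the full superdistance from the correspondingly shifted basepoint $\widetilde P_0$. Specialising $P$ successively to $\mathfrak{Q}=(\mathfrak{q},\Theta)$, $P_-=(1-\rho;\Theta)$, $\mathfrak{Q}^{n}=(1-\mathfrak{q}^{n}\rho,\Theta)$, $\mathfrak{Q}^{n}_-=(1-\mathfrak{q}^{n}\rho^{-1},\Theta)$ and to the variable point $(\rho;\Theta)$ itself yields $d(P_0,\widehat{\mathfrak{Q}})=\log|\mathfrak{q}|$, $d(P_0,\widehat P)=\log|\rho|$, $d(P_0,\widehat{P_-})=\log|1-\rho|$, $d(P_0,\widehat{\mathfrak{Q}^{n}})=\log|1-\mathfrak{q}^{n}\rho|$, $d(P_0,\widehat{\mathfrak{Q}^{n}_-})=\log|1-\mathfrak{q}^{n}\rho^{-1}|$, and the same identities with $\mathbf{d}(\widetilde P_0,\cdot)$ in place of $d(P_0,\cdot)$.

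The final step is the term-by-term comparison. Since $\mathrm{Im}(Z),\mathrm{Im}(\tau+\Theta\delta)\in(\Lambda^\infty_{\mathbb{R}})_0$ with $\epsilon_0(\mathrm{Im}(\tau+\Theta\delta))>0$, one has $\log|\mathfrak{q}|=-2\pi\,\mathrm{Im}(\tau+\Theta\delta)$ and $\log|\rho|=-2\pi\,\mathrm{Im}(Z)$, hence $\frac{\mathrm{Im}(Z)}{\mathrm{Im}(\tau+\Theta\delta)}=\frac{\log|\rho|}{\log|\mathfrak{q}|}=\frac{d(P_0,\widehat P)}{d(P_0,\widehat{\mathfrak{Q}})}$; consequently $\log\bigl|\mathfrak{q}^{B_2(\cdot)/2}\bigr|=\frac{1}{2}B_2(\cdot)\log|\mathfrak{q}|=\frac{1}{2}d(P_0,\widehat{\mathfrak{Q}})\,B_2\!\bigl(\frac{d(P_0,\widehat P)}{d(P_0,\widehat{\mathfrak{Q}})}\bigr)$ and $-\frac{2\pi}{\mathrm{Im}(\tau+\Theta\delta)}=\frac{4\pi^{2}}{\log|\mathfrak{q}|}=\frac{4\pi^{2}}{d(P_0,\widehat{\mathfrak{Q}})}$. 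Expanding the logarithm of the product into the sum of the logarithms of its factors and substituting the dictionary then turns the displayed expression for $\mathcal{G}$ into the first claimed formula; the second is obtained verbatim by replacing $d(P_0,\widehat X)$ with $\mathbf{d}(\widetilde P_0,\widehat X)$ for each of the five target points $X$, these two quantities being equal by the computation recalled above. I expect the genuine content to lie not in this bookkeeping but in the ingredients it stands on: (i) the supergeodesic analysis producing $\widehat P$ together with the \emph{exact} identities $d(P_0,\widehat P)=\mathbf{d}(\widetilde P_0,\widehat P)=\log|\sigma|$, in particular the cancellation of the nilpotent $\Theta\bar\Theta$-corrections effected by the shift $P_0\mapsto\widetilde P_0$; and (ii) the convergence of $\prod_{n}(1-\rho\mathfrak{q}^{n})(1-\rho^{-1}\mathfrak{q}^{n})$ in the De Witt topology, which is what legitimises interchanging $\log$ with the infinite sum so that $\sum_{n}\bigl(d(P_0,\widehat{\mathfrak{Q}^{n}})+d(P_0,\widehat{\mathfrak{Q}^{n}_-})\bigr)$ truly reproduces that factor.
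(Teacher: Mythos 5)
Your proposal is correct and follows exactly the route the paper intends: the paper gives no separate proof of Proposition \ref{10} beyond the computations immediately preceding it (the identification of $\widehat P$ and the identities $d(P_0,\widehat P)=\mathbf{d}(\widetilde P_0,\widehat P)=\log|\rho|$), and your term-by-term substitution of this dictionary into the intermediate formula $\mathcal{G}=\log|\mathfrak{q}^{B_2(\cdot)/2}(1-\rho)\prod_n(1-\rho\mathfrak{q}^n)(1-\rho^{-1}\mathfrak{q}^n)|-2\pi\Theta\bar\Theta/\mathrm{Im}(\tau+\Theta\delta)$ from the proof of part (3) is precisely the intended argument. Your closing remarks on the cancellation of the nilpotent corrections via the shift $P_0\mapsto\widetilde P_0$ and on convergence of the infinite product correctly identify where the actual content lies.
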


\end{document}